\newtheorem{theorem}{Theorem}
\newtheorem{example}{Example}
\newtheorem{definition}{Definition}
\newtheorem{remark}{Remark}
\newtheorem{lemma}{Lemma}
\newtheorem{proposition}{Proposition}
\begin{document}

\title{Singleton-Optimal LRCs and Perfect LRCs via
Cyclic and Constacyclic Codes\footnote{This paper was presented in part at the 2021 IEEE International Symposium on Information Theory (ISIT) \cite{FCXF21}}}
\author[,1,2,3]{Weijun Fang\thanks{Corresponding author: fwj@sdu.edu.cn}}
\author[4]{Fang-Wei Fu}
\author[5]{Bin Chen}
\author[6]{Shu-Tao Xia}
\affil[1]{Key Laboratory of Cryptologic Technology and Information Security, Ministry of Education, Shandong University, Qingdao, China}
\affil[2]{School of Cyber Science and Technology, Shandong University, Qingdao, China}
\affil[3]{Quancheng Laboratory, Jinan, China}
\affil[4]{Chern Institute of Mathematics and LPMC, Nankai University, Tianjin, China}
\affil[5]{Department of Computer Science and Technology, Harbin Institute of Technology, Shenzhen, China}
\affil[6]{Tsinghua Shenzhen International Graduate School, Tsinghua University, Shenzhen, China}

\date{}
\maketitle
\begin{abstract}
Locally repairable codes (LRCs) have emerged as an important coding scheme in distributed storage systems (DSSs) with relatively low repair cost by accessing fewer non-failure nodes. Theoretical bounds and optimal constructions of LRCs have been widely investigated. Optimal LRCs via cyclic and constacyclic codes
provide significant benefit of elegant algebraic structure and efficient encoding procedure.  In this paper, we continue to consider the constructions of optimal LRCs via cyclic and constacyclic codes with long code length. Specifically, we first obtain two classes of $q$-ary cyclic Singleton-optimal $(n, k, d=6;r=2)$-LRCs with length $n=3(q+1)$ when $3 \mid (q-1)$ and $q$ is even, and length $n=\frac{3}{2}(q+1)$ when $3 \mid (q-1)$ and $q \equiv 1(\bmod~4)$, respectively. To the best of our knowledge, this is the first construction of $q$-ary cyclic Singleton-optimal  LRCs with length $n>q+1$ and minimum distance $d \geq 5$. On the other hand, an LRC acheiving the Hamming-type bound is called a perfect LRC. By using cyclic and constacyclic codes, we construct two new families of $q$-ary perfect LRCs with length $n=\frac{q^m-1}{q-1}$, minimum distance $d=5$ and locality $r=2$.

\end{abstract}

\small\textbf{Keywords:} Local repairable codes,   Singleton-optimal LRCs, perfect LRCs, cyclic codes, constacyclic codes

\maketitle

\section{Introduction}
Distributed storage systems (DSSs) have become a popular way to store large amounts of data in modern data center, e.g., Windows \cite{WIN12} and Facebook \cite{FB13}. By distributing huge data over multiple storage nodes, DSSs can not only reduce the storage cost, but also improve
the reliability and scalability. However, node failures become a norm in DSSs rather than an exception. Thus erasure codes that optimize the repair bandwidth and repair degree (the number of nodes used in a repairing process) for recovering a failed node have been adopted in DSSs.  In \cite{GHSY12}, Gopalan \emph{et al.} first introduced the locality in an erasure code to reduce the repair degree, that is, a failed node can be recovered by accessing few available nodes.  Let $C$ be a $q$-ary $[n, k, d]$ linear code, which is just a subspace of $\mathbb{F}_{q}^{n}$ of dimension  $k$ with minimum Hamming distance $d$. Formally,  $C$ is called a \emph{locally repairable code} (LRC) with \emph{locality} $r$ if for any $i\in[n]\triangleq\{1, 2,\dots, n\}$, there exists a subset $R_i\subset [n]$ with $i \in R_i$ and $|R_{i}| \leq r+1$ such that the $i$-th symbol $c_i$ can be recovered by $\{c_j\}_{j\in R_i\setminus \{i\}}$, i.e., $c_i$ can be represented as a linear combination of $\{c_j\}_{j\in R_i\setminus \{i\}}$. $R_i$ is called the \emph{repair group} of the $i$-th symbol. An LRC is called to have \emph{disjoint local repair groups} if for any $1\leq i \neq j \leq n$, we have $|R_{i}|=r+1$ and either $R_{i} \cap R_{j} =\emptyset$ or $R_{i} = R_{j}$. There are some restrictions among the parameters $n, k, d, r$. One of these bounds is the famous Singleton-type bound derived in \cite{GHSY12}, which said that for any $q$-ary $[n, k, d]$-LRCs with locality $r$,
\begin{equation}
\label{singleton}
d\leq n-k+2-\left\lceil k/r\right\rceil.
\end{equation}

Note that when $r=k$, \eqref{singleton} reduces to the classical Singleton bound. 
\begin{definition}[Singleton-optimal LRCs]
A $q$-ary $[n, k, d]$ linear code $C$ with locality $r$ is called a \emph{Singleton-optimal} $(n, k, d; r)$-LRC if it achieves the bound \eqref{singleton} with equality.
\end{definition}

From the Singleton-type bound \eqref{singleton}, a Singleton-optimal $(n, k, d;r)$-LRC has to satisfy that $\frac{k}{n}=\frac{r}{r+1}-\frac{d-2}{n}$ (suppose $r \mid k$). Thus for fixed $d$ and $r$, the longer code length of a Singleton-optimal LRC leads to larger code rate. The construction of Singleton-optimal LRCs with long code length has been a hot topic in coding theory in recent years (see \cite{CFXF19,CFXHF20,CXHF18,GXY19,J19,JMX19,LMX19,LXY19,TB14,TBGC15,XY18}). In \cite{TB14}, Tamo and Barg  first gave a breakthrough construction of Singleton-optimal LRCs via subcodes of Reed-Solomon codes whose code length can go up to the alphabet size.
Due to their efficient encoding and decoding performance, cyclic codes and constacyclic codes have been widely used in the constructions of optimal LRCs (see \cite{CFXF19, CXHF18,LXY19,TBGC15}).  For $n \mid (q+1)$, by  using the techniques of cyclic and constacyclic codes, Chen \emph{et al.} \cite{CFXF19,CXHF18} have constructed the $q$-ary Singleton-optimal LRCs for all possible parameters. In \cite{JMX19}, Jin \emph{et al.}  also presented a construction of $q$-ary optimal LRCs with length $n=q+1$ via the automorphism groups of rational function fields.  For minimum distance $d=3$ or $4$,  Luo \emph{et al.} \cite{LXY19} proposed a construction of cyclic optimal LRCs with unbounded length. For minimum distance $d \geq 5$, Guruswami \emph{et al.} \cite{GXY19} proved that the code length of a $q$-ary optimal LRC must be at most roughly $O(dq^{3})$ and presented an algorithmic construction of $q$-ary optimal LRCs of length $\Omega_{d, r}(q^{1+1/\lfloor\frac{d-3}{2}\rfloor})$ for $r \geq d-2$. By employing algebraic structures of elliptic curves, Li \emph{et al.} \cite{LMX19} constructed a family of $q$-ary optimal LRCs with length up to $q+2\sqrt{q}$ and some particular localities. For $d=5, 6$ and $r \geq d-2$, Jin \cite{J19} presented an explicit construction of $q$-ary optimal LRCs of length $\Omega_{r}(q^{2})$ via binary constant weight codes. In \cite{XY18}, Xing and Yuan generalized Jin's results and presented a construction of $q$-ary optimal LRCs for general $d \geq 7$ via hypergraph theory.  In \cite{CFXHF20}, Chen \emph{et al.} considered the case of $r < d-2$.  In particular, for $d=6$ and $r=2$, the authors established an equivalent connection between the existence of Singleton-optimal LRCs and the existence of  lines in finite projective plane with certain properties. And they also provided some new constructions and improved bounds on Singleton-optimal LRCs. However, as we have seen, for $d \geq 5$, the longest code length of known $q$-ary cyclic Singleton-optimal LRCs is $q+1$ until now. So it is quite interesting to construct $q$-ary cyclic Singleton-optimal LRCs with length $n >q+1$ and $d \geq 5$.  

On the other hand, the Singleton-type bound \eqref{singleton} is not always achievable in some cases (see \cite{SDYL14,TPD16}), and there are only a few  Singleton-optimal LRCs over binary or ternary fields (see \cite{Hao,ternary}). Some other theoretical bounds on the parameters of LRCs were also investigated in the literature.  Cadambe and Mazumdar \cite{CM15} first derived a theoretical bound of LRCs depending on the field size, which is known as \emph{C-M} bound. Some well-known binary codes, such as Hamming codes, Simplex codes etc., are optimal with respect to the \emph{C-M} bound (see \cite{HYUS16}). In \cite{ABHMT18},  Agarwal \emph{et al.} presented several new
combinatorial bounds on LRC codes including the locality-aware
sphere packing and Plotkin bounds. Recently, Tebbi \emph{et al.} \cite{TCS} also obtained a linear programming bound for LRCs with arbitrary code parameters. However, it is not easy to give an explicit form of the \emph{C-M} bound or the bound in \cite{TCS} in general. In \cite{HXSCFY20}, Hao \emph{et al.} generalized the C-M bound and obtained a Griesmer-type bound.  In \cite{WZL19}, Wang \emph{et al.} derived a new explicit bound on the size of LRCs via a \emph{sphere-packing based approach}. Precisely, they proved that
 \begin{equation}\label{2}
   A_{q}(n,d;r) \leq \frac{q^{\dim(\mathcal{V})}}{B_{\mathcal{V}}\left(\left\lfloor(d-1)/2\right\rfloor\right)},
 \end{equation}
 where $A_{q}(n,d;r)$ is the maximal number of codewords of a $q$-ary LRC of length $n$, minimum distance $d$, locality $r$ and with disjoint local repair groups, $\mathcal{V}$ is the $\mathbb{F}_q$-linear space containing all of $(n,d;r)$-LRCs with disjoint local repair groups and  $B_{\mathcal{V}}(\lfloor\frac{d-1}{2}\rfloor)\triangleq|\{\bm v \in \mathcal{V}: wt(\bm{v}) \leq \lfloor\frac{d-1}{2}\rfloor\}|.$ Note that when $\mathcal{V}$ is the whole space $\mathbb{F}_{q}^{n}$, the bound \eqref{2} reduces to the well-known classical Hamming bound. On the other hand, as the authors mentioned in \cite{WZL19}, they obtained the bound \eqref{2} by using the essential idea of the Hamming bound, not just its
expression. Thus, in this paper, we regard bound \eqref{2} as a \emph{Hamming-type bound} for LRCs with disjoint local repair groups. Moreover, in \cite[Theorem 11]{WZL19}, taking $\mathcal{V}$ as the dual space of the locality rows of  the parity-check matrix, they proved that \[B_{\mathcal{V}}(\lfloor\frac{d-1}{2}\rfloor)=\sum_{0 \leq i_{1}+\dots+i_{\ell}\leq \lfloor\frac{d-1}{2}\rfloor}\prod_{j=1}^{\ell}\beta(r, i_{j}),\] where $\beta(r, i)=\frac{1}{q}\big((q-1)^{i}+(-1)^{i}(q-1)\big)\binom{r+1} {i}$. Thus we can rewrite bound \eqref{2} as a more explicit form, i.e., 
\begin{equation}\label{3}
  q^{k} \leq \frac{q^{\frac{rn}{r+1}}}{\sum\limits_{0 \leq i_{1}+\dots+i_{\ell}\leq \lfloor\frac{d-1}{2}\rfloor}\prod_{j=1}^{\ell}\beta(r, i_{j})}.
\end{equation}
It can be seen that for $d=5$ or $6$, bound \eqref{3} becomes
\begin{equation}\label{4}
q^{k} \leq \frac{q^{\frac{r}{r+1}n}}{\frac{rn}{2}(q-1)+1}.
\end{equation}
\begin{definition}[Perfect LRCs, \cite{FCXF20-2,Fang22}]
A $q$-ary $(n, k, d;r)$-LRC with disjoint local repair groups is called a \emph{perfect LRC} if it  achieves the bound \eqref{2} (or \eqref{3})  with equality.
\end{definition}

 As we known, Hamming codes, Binary and Ternary Golay codes are the all nontrivial perfect linear codes in classical coding theory, so it is expected to determine  perfect LRCs for all possible parameters. From the Hamming-type bound \eqref{3}, we know that if there exists a $q$-ary $(n, k, d;r)$-perfect LRC with disjoint repair groups, then  $\sum\limits_{0 \leq i_{1}+\dots+i_{\ell}\leq \lfloor\frac{d-1}{2}\rfloor}\prod_{j=1}^{\ell}\beta(r, i_{j})$ should be exactly a power of $q$. So it is quite difficult  to construct perfect LRCs and there are few works on constructions of perfect LRCs.
In \cite{GC14}, Goparaju \emph{et al.} have presented a family of binary perfect $(n=2^{u}-1, k=\frac{2}{3}n-u, d=6; r=2)$-LRCs for even $u$, which is  the only one known class of cyclic perfect LRCs until now. Some  new classes of perfect LRCs have been presented in \cite{FCXF20-2,Fang22} via the techniques of finite geometry and finite fields.



 In this paper, firstly, we will give two new constructions of cyclic Singleton-optimal LRCs with longer length. Precisely, for $3 \mid (q-1)$ and $q$ is even, we give a construction of $q$-ary cyclic optimal LRCs with length $n=3(q+1)$, minimum distance $d=6$ and locality $r=2$; For $3 \mid (q-1)$ and $q \equiv 1(\bmod~4)$, we also obtain a family of $q$-ary cyclic Singleton-optimal LRCs with length $n=\frac{3}{2}(q+1)$, minimum distance $d=6$ and locality $r=2$. To the best of our knowledge, this is the first construction of $q$-ary cyclic Singleton-optimal  LRCs with length $n>q+1$ and minimum distance $d \geq 5$. Secondly, for even $q \geq 4$ and even $m$ with $3 \mid (q+1)$ and $\gcd(m,q-1)=1$, we construct a family of cyclic $(n=\frac{q^{m}-1}{q-1}, k=\frac{2}{3}n-m, d=5; r=2)$-perfect LRCs. Moreover, for general $q$ with $3 \mid (q+1)$ and even $m$, we construct a family of constacyclic $(n=\frac{q^{m}-1}{q-1}, k=\frac{2}{3}n-m, d=5; r=2)$-perfect LRCs. To the best of our knowledge, this is the first construction of $q$-ary (consta)cyclic perfect LRCs for general $q$.

The rest of this paper is organized as follows. In Section 2, we briefly review some basic results about LRCs and (consta)cyclic codes. In Section 3, we provide our new constructions of cyclic Singleton-optimal LRCs. In Section 4, we give two new constructions of cyclic and constacyclic perfect LRCs, respectively.  We conclude this paper in Section 5.

\section{Preliminaries}
In this section, we briefly review some basic results about cyclic codes, constacyclic codes and LRCs.

Suppose $q$ is a prime power and $\mathbb{F}_{q}$ is a finite field with $q$ elements. Denote $[n] \triangleq \{1,2,\dots, n\}$. For a vector $\bm{v}=(v_{1}, v_{2}, \cdots, v_{n}) \in \mathbb{F}_{q}^{n}$, the support of $\bm{v}$ is defined as $\text{supp}(\bm{v})\triangleq \{i \in [n]: v_{i} \neq 0\}$. The Hamming weight $wt(\bm{v})$ of $\bm{v}$ is defined as the size of $\text{supp}(\bm{v})$, i.e., $wt(\bm{v})=|\text{supp}(\bm{v})|$. The Hamming distance between $\bm u$ and $\bm v$ is defined as $d(\bm u, \bm v) \triangleq wt(\bm u-\bm v)$. A $q$-ary linear code $C$ of length $n$ and dimension $k$ is just a $k$-dimensional subspace of $\mathbb{F}^n_q$. The minimum distance of $C$ is defined as $d \triangleq \min\limits_{\bm c \in C, \bm c \neq \bm 0}wt(\bm c).$ An $[n, k, d]_q$-linear code $C$ is a linear code of length $n$, dimension $k$ and minimum distance $d$ over $\mathbb{F}_q$.

We recall some basic results about constacyclic codes, which are generalizations of cyclic codes. The readers may refer some standard textbooks on coding theory for more details on cyclic codes (\cite{HP03,MS77}) and some references on constacyclic codes (\cite{ASR01,B08,CFLL12,D10,FWF17,RZ08,YC15}).
\begin{definition}[$\lambda$-constacyclic codes and cyclic codes]\label{def1}
Suppose $\lambda \in \mathbb{F}^{*}_q$. A $q$-ary linear code $C$ of length $n$ is called a $\lambda$-constacyclic code if  $(c_0, c_1, \cdots, c_{n-1}) \in C$ implies that $(\lambda c_{n-1}, c_0, \cdots, c_{n-2}) \in C$. In particular, if $\lambda=1$, a $\lambda$-constacyclic code $C$ is call a cyclic code.
\end{definition}

If we identify a vector $\bm c=(c_0, c_1, \cdots, c_{n-1})$ with a polynomial $c(x)=c_0+c_1x+\cdots+c_{n-1}x^{n-1}$, then a $q$-ary $\lambda$-constacyclic code $C$ of length $n$ can be
identified with an ideal of the ring $\mathbb{F}_q[x]/(x^n-\lambda)$. Note that $\mathbb{F}_q[x]/(x^n-\lambda)$ is a principal domain, thus every ideal of $\mathbb{F}_q[x]/(x^n-\lambda)$ is generated by a monic polynomial $g(x)$ with $g(x) \mid (x^n-\lambda)$. We also define the Hamming weight  of $c(x)$ as $wt(c(x))\triangleq wt(\bm c)$, the Hamming weight of the corresponding vector $\bm c$.

\begin{definition}[Generator polynomial]
Suppose  $\lambda \in \mathbb{F}_q^*$ and $C$ is a $\lambda$-constacyclic code of length $n$  generated by a monic polynomial $g(x)$ with $g(x) \mid (x^n-\lambda)$, then $g(x)$ is called the generator polynomial of $C$, and we denote $C=\langle g(x) \rangle$. It is well-known that $\dim(C)=n-\deg(g(x)).$
\end{definition}
For any polynomial $f(x)=f_0+f_1x+\cdots+f_mx^m \in \mathbb{F}_q[x]$, where $f_0, f_m\neq0$, define its \emph{reciprocal polynomial} $\widetilde{f}(x)\triangleq f_m+f_{m-1}x+\cdots+f_0x^m=x^mf(x^{-1})$.
For convenience, we denote $\overline{f}(x)=f^{-1}_0\widetilde{f}(x)$, which is a monic polynomial.

The following lemma can be easily derived from the definitions.
\begin{lemma}\label{lem1}
For any polynomial $f(x), f_1(x) \in \mathbb{F}_q[x]$ with $f(0)$ and $f_1(0)$ are nonzeros, we have 
\begin{description}
\item[(i)] $\deg(f(x))=\deg(\overline{f}(x))$ and $wt(f(x))=wt(\overline{f}(x))$.
\item[(ii)] $f_1(x) \mid f(x)$ if and only if $\overline{f_1}(x) \mid \overline{f}(x)$.
\end{description}
\end{lemma}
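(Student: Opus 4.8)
The plan is to reduce both parts to two elementary properties of the reciprocal operation: the involution identity $\widetilde{\widetilde{f}}(x)=f(x)$, and the multiplicativity $\widetilde{fg}(x)=\widetilde{f}(x)\,\widetilde{g}(x)$ whenever $f(0)$ and $g(0)$ are nonzero.

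For part (i), I would argue directly from $\widetilde{f}(x)=x^{m}f(x^{-1})$ with $m=\deg(f)$. Since $f(0)=f_{0}\neq 0$, the leading coefficient of $\widetilde{f}(x)$ is $f_{0}\neq 0$, so $\deg(\widetilde{f})=m=\deg(f)$; and $\overline{f}=f_{0}^{-1}\widetilde{f}$ differs from $\widetilde{f}$ by a nonzero scalar, which changes neither the degree nor the support, hence $\deg(\overline{f})=\deg(f)$. For the weight, the coefficient sequence of $\widetilde{f}$ is the reversal of that of $f$, so it has the same number of nonzero entries; scalar multiplication again preserves the support, giving $wt(\overline{f})=wt(\widetilde{f})=wt(f)$.

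For part (ii), I would first verify multiplicativity: if $f(0),g(0)$ are nonzero then $fg$ is nonzero with $(fg)(0)=f(0)g(0)\neq 0$ and $\deg(fg)=\deg(f)+\deg(g)$, so $\widetilde{fg}(x)=x^{\deg(f)+\deg(g)}(fg)(x^{-1})=\bigl(x^{\deg(f)}f(x^{-1})\bigr)\bigl(x^{\deg(g)}g(x^{-1})\bigr)=\widetilde{f}(x)\widetilde{g}(x)$. Now assume $f_{1}(x)\mid f(x)$ and write $f(x)=f_{1}(x)h(x)$; from $f(0)\neq 0$ we get $h(0)\neq 0$, so multiplicativity yields $\widetilde{f}=\widetilde{f_{1}}\,\widetilde{h}$, and after rescaling by the relevant nonzero constants this becomes $\overline{f}=c\,\overline{f_{1}}\,\widetilde{h}$ with $c\in\mathbb{F}_{q}^{*}$, whence $\overline{f_{1}}\mid\overline{f}$. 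The converse is obtained by applying this implication to $\overline{f_{1}}$ and $\overline{f}$ in place of $f_{1}$ and $f$, and then using $\overline{\overline{f}}(x)=f_{m}^{-1}f(x)$ (which follows from $\widetilde{\widetilde{f}}=f$) to conclude that $f_{1}(x)\mid f(x)$ up to a unit, hence $f_{1}(x)\mid f(x)$.

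The lemma is elementary, so there is no substantive obstacle; the only thing requiring attention is the bookkeeping of nonzero constant and leading coefficients---ensuring in particular that a divisor and its cofactor of a polynomial with nonzero constant term again have nonzero constant terms, and that $\overline{\;\cdot\;}$ is an involution up to a nonzero scalar, so that the stated equivalence is genuinely symmetric in $f$ and $f_{1}$.
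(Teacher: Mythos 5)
Your proof is correct and complete. The paper gives no proof of this lemma (it is dismissed as "easily derived from the definitions"), so there is nothing to compare against; your verification via the degree/support preservation of coefficient reversal, the multiplicativity $\widetilde{fg}=\widetilde{f}\,\widetilde{g}$ for polynomials with nonzero constant term, and the involution $\overline{\overline{f}}=f_m^{-1}f$ is exactly the standard argument, with the necessary bookkeeping (the cofactor $h$ also has $h(0)\neq 0$, and $\overline{f}(0)=f_0^{-1}f_m\neq 0$ so the implication can be applied in reverse) handled properly.
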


Similar to cyclic codes, the dual code of  a $\lambda$-constacyclic code is a $\lambda^{-1}$-constacyclic code. Precisely, 

\begin{lemma}[\cite{D10,FWF17,YC15}]\label{lem2}
 Let $\lambda \in \mathbb{F}^*_{q}$ and $C=\langle g(x) \rangle$ be a $\lambda$-constacyclic code of length $n$, where $g(x)$ is the generator polynomial of $C$. Denote $h(x)=\frac{x^n-\lambda}{g(x)}$. Then the dual code $C^{\perp}$ of $C$ is a $\lambda^{-1}$-constacyclic code and $C^{\perp}=\langle \overline{h}(x)\rangle$.
\end{lemma}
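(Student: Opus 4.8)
The statement concerns the dual of the $\lambda$-constacyclic code $C=\langle g(x)\rangle$ of length $n$: it asserts that $C^\perp$ is $\lambda^{-1}$-constacyclic and that its generator polynomial is $\overline{h}(x)$, where $h(x)=(x^n-\lambda)/g(x)$. The plan is threefold: (i) prove that $C^\perp$ is $\lambda^{-1}$-constacyclic by exhibiting an adjointness identity between the $\lambda$- and $\lambda^{-1}$-constacyclic shift maps; (ii) use Lemma~\ref{lem1} to check that $\overline{h}(x)$ is a monic divisor of $x^n-\lambda^{-1}$, so that $C':=\langle\overline{h}(x)\rangle$ is a well-defined $\lambda^{-1}$-constacyclic code having the same dimension as $C^\perp$; and (iii) verify the single containment $C'\subseteq C^\perp$ by a direct inner-product computation, which together with the dimension count forces $C^\perp=C'$.

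For (i), let $T_\mu$ denote the $\mu$-constacyclic shift $(v_0,\dots,v_{n-1})\mapsto(\mu v_{n-1},v_0,\dots,v_{n-2})$ on $\mathbb{F}_q^n$; it is invertible with $T_\mu^{-1}(v_0,\dots,v_{n-1})=(v_1,\dots,v_{n-1},\mu^{-1}v_0)$. A coordinatewise check gives, for all $\bm u,\bm v\in\mathbb{F}_q^n$,
\[
\langle T_\lambda\bm u,\bm v\rangle=\lambda u_{n-1}v_0+u_0v_1+\cdots+u_{n-2}v_{n-1}=\langle \bm u,T_{\lambda^{-1}}^{-1}\bm v\rangle .
\]
Since $C$ is $T_\lambda$-invariant, for every $\bm v\in C^\perp$ and every $\bm u\in C$ we have $\langle\bm u,T_{\lambda^{-1}}^{-1}\bm v\rangle=\langle T_\lambda\bm u,\bm v\rangle=0$, so $T_{\lambda^{-1}}^{-1}\bm v\in C^\perp$; as $T_{\lambda^{-1}}$ is an invertible linear map preserving the finite-dimensional space $C^\perp$, it follows that $T_{\lambda^{-1}}(C^\perp)=C^\perp$, i.e. $C^\perp$ is $\lambda^{-1}$-constacyclic.

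For (ii), since $h(x)\mid(x^n-\lambda)$ and $x^n-\lambda$ has nonzero constant term, $h(0)\neq0$; Lemma~\ref{lem1}(ii) together with the elementary identity $\overline{x^n-\lambda}(x)=x^n-\lambda^{-1}$ then yields $\overline{h}(x)\mid(x^n-\lambda^{-1})$. As $\overline{h}(x)$ is monic, $C':=\langle\overline{h}(x)\rangle$ is a genuine $\lambda^{-1}$-constacyclic code, and by Lemma~\ref{lem1}(i) and $\dim C=n-\deg g(x)$ we get $\dim C'=n-\deg\overline{h}(x)=n-\deg h(x)=\deg g(x)=n-\dim C=\dim C^\perp$. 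Hence it suffices to prove $C'\subseteq C^\perp$.

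For (iii), by part (i) the space $C^\perp$ is an ideal of $\mathbb{F}_q[x]/(x^n-\lambda^{-1})$, so it is enough to show that the length-$n$ vector of $\overline{h}(x)$ (a polynomial of degree $\deg h(x)\le n-1$, outside the trivial case $g(x)=1$) lies in $C^\perp$; then $C^\perp$, being an ideal containing $\overline{h}(x)$, contains $C'=\langle\overline{h}(x)\rangle$. Since $C$ is spanned by the length-$n$ vectors of $x^ig(x)$ for $0\le i\le n-1-\deg g(x)$ (no reduction modulo $x^n-\lambda$ being needed), it suffices to check $\langle x^ig(x),\overline{h}(x)\rangle=0$ for those $i$. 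Writing $g(x)=\sum_j g_jx^j$, $h(x)=\sum_j h_jx^j$, $t=\deg h(x)$, and using $\overline{h}_\ell=h(0)^{-1}h_{t-\ell}$, one computes that $\langle x^ig(x),\overline{h}(x)\rangle=h(0)^{-1}\sum_j g_jh_{t-i-j}$ equals $h(0)^{-1}$ times the coefficient of $x^{t-i}$ in $g(x)h(x)=x^n-\lambda$; for $0\le i\le t-1$ we have $1\le t-i\le n-1$, so this coefficient vanishes. This gives $C'\subseteq C^\perp$, hence $C^\perp=C'=\langle\overline{h}(x)\rangle$. The argument is routine; the only places needing care are the bookkeeping with constant terms in the reciprocal identities and the two degenerate cases $g(x)=1$ (where $C^\perp=\{\bm 0\}$) and $g(x)=x^n-\lambda$ (where $C^\perp=\mathbb{F}_q^n$), both of which are checked directly.
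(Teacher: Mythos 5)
Your proof is correct. Note that the paper itself offers no proof of Lemma~\ref{lem2} --- it is quoted as a known result from the cited references --- so there is no in-paper argument to compare against. Your three-step route (the adjointness identity $\langle T_\lambda\bm u,\bm v\rangle=\langle\bm u,T_{\lambda^{-1}}^{-1}\bm v\rangle$ to get $\lambda^{-1}$-constacyclicity of $C^\perp$, the dimension count $\dim\langle\overline{h}\rangle=\deg g=\dim C^\perp$, and the orthogonality check that $\langle x^ig(x),\overline{h}(x)\rangle$ is $h(0)^{-1}$ times a middle coefficient of $g(x)h(x)=x^n-\lambda$, hence zero) is exactly the standard argument in those references, and all the details check out: the identity $\overline{x^n-\lambda}=x^n-\lambda^{-1}$, the use of Lemma~\ref{lem1} to transport divisibility and degree, the index range $0\le i\le t-1$ giving $1\le t-i\le n-1$, and the two degenerate cases $g=1$ and $g=x^n-\lambda$. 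Nothing is missing.
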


 We now always assume that $\gcd(n,q)=1$.  Let $\mathbb{F}_{q^m}$ be the splitting field of $x^n-\lambda$ over $\mathbb{F}_q$. Suppose $\theta \in \mathbb{F}_{q^m}$ such that $\theta^n=\lambda$. Let $s$ be the order of $q$ modulo $n$, i.e., the least number $i$ such that $n \mid (q^i-1).$ Then we can choose a primitive $n$-th root of unity  $\alpha \in \mathbb{F}_{q^s}$ and the roots of $x^n-\lambda$ are $\theta, \theta \alpha, \cdots, \theta \alpha^{n-1}.$

The BCH bound and Hartmann-Tzeng bound \cite{HT72} are two well-known lower bounds on the minimum distance of  cyclic codes. In \cite{ASR01}, the authors presented the BCH bound for constacyclic codes. The Hartmann-Tzeng bound for constacyclic codes has been obtained in \cite{RZ08}. We give an alternative proof of the Hartmann-Tzeng bound for constacyclic codes for completeness. 

\begin{lemma}[BCH Bound for Constacyclic Codes \cite{ASR01}]\label{BCH}
 Let $C=\langle g(x) \rangle$ be a $\lambda$-constacyclic code of length $n$, where $g(x) \mid (x^n-\lambda)$ is the generator polynomial of $C$. Suppose $u$ and $b$ are integers with $\gcd(b,n)=1$. If  $g(\theta \alpha^{u+ib})=0$, for $i=0, 1, \cdots, \delta-2$, where $\delta \geq 2$, then the minimum distance $d(C)$ of $C$ is at least $\delta$.
\end{lemma}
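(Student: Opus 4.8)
The plan is to mimic the classical Vandermonde-determinant proof of the BCH bound, keeping track of the constacyclic ``twist'' $\theta$. Let $c(x)$ be a nonzero codeword of $C$ of minimum weight $w=wt(c(x))=d(C)$, and write $c(x)=\sum_{l=1}^{w}c_{j_{l}}x^{j_{l}}$ with $0\le j_{1}<j_{2}<\cdots<j_{w}\le n-1$ and every $c_{j_{l}}\neq 0$. Since $c(x)$ lies in the ideal generated by $g(x)$ and has degree less than $n$, in fact $g(x)\mid c(x)$ in $\mathbb{F}_{q}[x]$, so every root of $g(x)$ is a root of $c(x)$; in particular $c(\theta\alpha^{u+ib})=0$ for $i=0,1,\dots,\delta-2$. (Here $\theta\alpha^{u+ib}$ depends only on $u+ib$ modulo $n$, since $\alpha^{n}=1$, so the hypothesis is unambiguous.)

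Next I would expand these evaluations. For each $i$,
\[
0=c(\theta\alpha^{u+ib})=\sum_{l=1}^{w}c_{j_{l}}\,\theta^{j_{l}}\alpha^{uj_{l}}\bigl(\alpha^{bj_{l}}\bigr)^{i}.
\]
Put $z_{l}=c_{j_{l}}\theta^{j_{l}}\alpha^{uj_{l}}$ and $y_{l}=\alpha^{bj_{l}}$. Because $\theta\neq0$, $\alpha\neq0$ and $c_{j_{l}}\neq0$, each $z_{l}$ is nonzero; and because $\gcd(b,n)=1$ while $\alpha$ is a primitive $n$-th root of unity, the map $j\mapsto\alpha^{bj}$ is injective on $\{0,1,\dots,n-1\}$, so $y_{1},\dots,y_{w}$ are pairwise distinct.

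Then I would argue by contradiction. Assume $w\le\delta-1$. Taking the subsystem with $i=0,1,\dots,w-1$ (permissible since $w-1\le\delta-2$) yields a homogeneous linear system in $(z_{1},\dots,z_{w})$ whose coefficient matrix is the $w\times w$ Vandermonde matrix $\bigl(y_{l}^{\,i}\bigr)_{0\le i\le w-1,\,1\le l\le w}$. Its determinant $\prod_{1\le l<l'\le w}(y_{l'}-y_{l})$ is nonzero by the distinctness of the $y_{l}$, forcing $z_{1}=\cdots=z_{w}=0$, a contradiction. Hence $w\ge\delta$, i.e.\ $d(C)\ge\delta$.

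The argument is essentially routine. The only two points needing care --- and precisely where the constacyclic case differs from the cyclic one --- are: (i) the twist factor $\theta^{j_{l}}$ is harmless because it is a nonzero scalar absorbed into $z_{l}$; and (ii) the coprimality hypothesis $\gcd(b,n)=1$, together with primitivity of $\alpha$, is exactly what makes the $y_{l}$ distinct so that the Vandermonde matrix is invertible. I do not anticipate a genuine obstacle here.
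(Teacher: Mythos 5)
Your proof is correct. The paper itself states this lemma without proof, citing it to the reference [Aydin--Siap--Ray-Chaudhuri], so there is no in-paper argument to compare against; your Vandermonde argument is the standard one, and you correctly isolate the only two points where the constacyclic twist matters: the factor $\theta^{j_l}$ is absorbed into the nonzero coefficient $z_l$, and $\gcd(b,n)=1$ together with the primitivity of $\alpha$ makes the nodes $y_l=(\alpha^b)^{j_l}$ pairwise distinct. (For what it is worth, the paper's proof of the related Hartmann--Tzeng bound, Lemma~\ref{HT}, takes the BCH bound as its base case and then uses a power-sum argument with auxiliary polynomials rather than a determinant, but that is a separate statement.)
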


\begin{lemma}[Hartmann-Tzeng Bound for Constacyclic Codes \cite{RZ08}]\label{HT}
 Let $C=\langle g(x) \rangle$ be a $\lambda$-constacyclic code of length $n$, where $g(x) \mid (x^n-\lambda)$ is the generator polynomial of $C$. Suppose $u, b_1$ and $b_2$ are integers with $\gcd(b_1,n)=\gcd(b_2,n)=1$. If  $g(\theta \alpha^{u+i_1b_1+i_2b_2})=0$, for $i_1=0, 1, \cdots, \delta-2$; $i_2=0,1,\cdots, \gamma$, where $\delta \geq 2$ and $\gamma \geq 0$, then the minimum distance $d(C)$ of $C$ is at least $\delta+\gamma$.
\end{lemma}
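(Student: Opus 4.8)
\emph{Proof proposal.} The plan is to follow the classical proof of the Hartmann--Tzeng bound for cyclic codes, adapting only the description of the roots of $x^n-\lambda$ to the constacyclic setting recalled above: these roots are $\theta\alpha^{0},\theta\alpha^{1},\dots,\theta\alpha^{n-1}$, where $\theta^{n}=\lambda$ and $\alpha$ is a primitive $n$-th root of unity in $\mathbb{F}_{q^s}$. The whole argument reduces to a linear-algebraic lemma about a two-parameter generalized Vandermonde system, which I would prove by induction on $\gamma$; its case $\gamma=0$ recovers the BCH bound (Lemma \ref{BCH}).

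First I would translate the hypothesis into evaluation conditions on a codeword. Let $c(x)=\sum_{t=1}^{w}c_{l_t}x^{l_t}\in C$ be an arbitrary nonzero codeword, with distinct support positions $l_1,\dots,l_w\in\{0,1,\dots,n-1\}$ and $c_{l_t}\neq 0$; it suffices to show $w\ge\delta+\gamma$. Since $C=\langle g(x)\rangle$, we have $g(x)\mid c(x)$ (taking the representative of degree $<n$), so every root of $g$ is a root of $c$; in particular $c(\theta\alpha^{u+i_1b_1+i_2b_2})=0$ for all $0\le i_1\le\delta-2$ and $0\le i_2\le\gamma$. Expanding each such evaluation gives $\sum_{t=1}^{w}a_tx_t^{i_1}y_t^{i_2}=0$, where $a_t=c_{l_t}\theta^{l_t}\alpha^{ul_t}$, $x_t=\alpha^{b_1l_t}$ and $y_t=\alpha^{b_2l_t}$. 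Since $\theta\neq 0$, $\alpha\neq 0$ and $c_{l_t}\neq 0$, each $a_t$ is nonzero; and since $\gcd(b_1,n)=\gcd(b_2,n)=1$, multiplication by $b_1$ (resp.\ by $b_2$) permutes $\mathbb{Z}/n\mathbb{Z}$, so the $x_t$ are pairwise distinct and the $y_t$ are pairwise distinct.

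It then remains to prove the key lemma: if $x_1,\dots,x_w$ are distinct nonzero field elements, $y_1,\dots,y_w$ are distinct nonzero field elements, $a_1,\dots,a_w$ are all nonzero, and $\sum_{t}a_tx_t^{i}y_t^{j}=0$ for all $0\le i\le\delta-2$, $0\le j\le\gamma$, then $w\ge\delta+\gamma$. For $\gamma=0$ this is the usual Vandermonde argument: if $w\le\delta-1$, the $w\times w$ matrix $(x_t^{i})_{0\le i\le w-1,\,1\le t\le w}$ is a Vandermonde matrix with distinct nodes, hence invertible, which forces all $a_t=0$, a contradiction. For the inductive step, for each $0\le j\le\gamma-1$ I would subtract $y_w$ times the $j$-th equation from the $(j+1)$-th to get $\sum_{t=1}^{w-1}a_t'x_t^{i}y_t^{j}=0$ with $a_t'=a_t(y_t-y_w)\neq 0$ (the $t=w$ term drops out); applying the inductive hypothesis to these $w-1$ terms gives $w-1\ge\delta+\gamma-1$. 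Combining the lemma with the reduction above yields $w\ge\delta+\gamma$, and since $c(x)$ was an arbitrary nonzero codeword, $d(C)\ge\delta+\gamma$.

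The conceptual content is entirely standard, so I do not anticipate a real obstacle; the care needed is only in the bookkeeping — checking that the factor $\theta^{l_t}\alpha^{ul_t}$ is absorbed harmlessly into the nonzero coefficients $a_t$, that the two coprimality hypotheses are exactly what make the $x_t$ (and the $y_t$) pairwise distinct modulo the relation $\alpha^{n}=1$, and that the index ranges match the statement so that the $\gamma=0$ case agrees with Lemma \ref{BCH}.
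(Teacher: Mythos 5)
Your proposal is correct, and it reaches the conclusion by a genuinely different route from the paper. Both arguments reduce the problem to the same bilinear vanishing system $\sum_{t}a_tx_t^{i_1}y_t^{i_2}=0$ over the support of a codeword (your translation of the root conditions, with $a_t=c_{l_t}\theta^{l_t}\alpha^{ul_t}$, $x_t=\alpha^{b_1l_t}$, $y_t=\alpha^{b_2l_t}$, matches the paper's setup, and your use of $\gcd(b_i,n)=1$ to get distinctness of the $x_t$ and of the $y_t$ is exactly right). The difference is in how that system is killed. The paper first invokes the BCH bound to get $w\ge\delta$, assumes $\delta\le w\le\delta+\gamma-1$, and then makes a one-shot computation: it introduces the two auxiliary polynomials $f_1(x)=\prod_{j_1=1}^{\delta-2}(x-\alpha^{k_{j_1}b_1})$ and $f_2(x)=\prod_{j_2=\delta-1}^{w-1}(x-\alpha^{k_{j_2}b_2})$, takes the linear combination of all the equations weighted by their coefficients $\sigma_{i_1}\tau_{i_2}$, and shows $f_1(1)f_2(1)$ is simultaneously nonzero and zero. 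You instead isolate the generalized Vandermonde statement as a standalone lemma and prove it by induction on $\gamma$, with the BCH/Vandermonde case as the base and a difference trick ($(j+1)$-th equation minus $y_w$ times the $j$-th) eliminating one support position per step. Your version is more modular and makes the $\gamma$-dependence transparent without needing the preliminary appeal to Lemma \ref{BCH} or the normalization $c(x)=1+\sum_jc_jx^{k_j}$; the paper's version avoids induction at the cost of the ad hoc choice of $f_1,f_2$. The only bookkeeping point worth adding to your write-up is the degenerate case $w=1$ in the inductive step (where the reduced system is empty): there the equation with $i_1=i_2=0$ already forces $a_1=0$, so the hypothesis cannot hold and the case is vacuous.
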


\begin{proof}
Firstly, by the conditions, we know the minimum distance $d(C) \geq \delta$ from Lemma \ref{BCH}. Let $c(x) \in C$ be a nonzero codeword with $wt(c(x))=w$. We suppose $\delta \leq w \leq \delta+\gamma-1.$ Since $C$ is a $\lambda$-constacyclic code, without loss of generality, we may assume that 
\[c(x)=1+\sum_{j=1}^{w-1}c_jx^{k_j},\]
where $c_i \in \mathbb{F}^{*}_q$ and $0 < k_1<k_2< \cdots<k_{w-1}< n$. Denote 
\[S_i=c(\theta \alpha^i)-1=\sum_{j=1}^{w-1}c_j(\theta \alpha^i)^{k_j}.\]
Note that $g(x) \mid c(x)$, hence $c(\theta \alpha^{u+i_1b_1+i_2b_2})=0$, for $i_1=0, 1, \cdots, \delta-2$; $i_2=0,1,\cdots, \gamma$. Thus 
\[S_{u+i_1b_1+i_2b_2}=-1, \textnormal{ for } i_1=0, 1, \cdots, \delta-2; i_2=0,1,\cdots, \gamma.\]
Let 
\[f_1(x)=\prod_{j_1=1}^{\delta-2}(x-\alpha^{k_{j_1}b_1})=\sum_{i_1=0}^{\delta-2}\sigma_{i_1}x^{i_1},\]
and
\[f_2(x)=\prod_{j_2=\delta-1}^{w-1}(x-\alpha^{k_{j_2}b_2})=\sum_{i_2=0}^{w-\delta-1}\tau_{i_2}x^{i_2}.\]
Since $\gcd(b_1,n)=\gcd(b_2,n)=1$ and $0 < k_1<k_2< \cdots<k_{w-1}< n$, we have $\alpha^{k_{j_1}b_1}\neq 1$ and $\alpha^{k_{j_2}b_2}\neq 1$ for any $1 \leq j_1, j_2 \leq w-1$. Thus
\[f_1(1)f_2(1)=\prod_{j_1=1}^{\delta-2}(1-\alpha^{k_{j_1}b_1})\prod_{j_2=\delta-1}^{w-1}(1-\alpha^{k_{j_2}b_2}) \neq 0.\]
On the other hand,
\begin{eqnarray*}
f_1(1)f_2(1)&=& \sum_{i_2=0}^{w-\delta-1}\tau_{i_2}\sum_{i_1=0}^{\delta-2}\sigma_{i_1}= -\sum_{i_2=0}^{w-\delta-1}\tau_{i_2}\sum_{i_1=0}^{\delta-2}S_{u+i_1b_1+i_2b_2}\sigma_{i_1} \\
&=&-\sum_{i_2=0}^{w-\delta-1}\tau_{i_2}\sum_{i_1=0}^{\delta-2}\sigma_{i_1}\sum_{j=1}^{w-1}c_{j}(\theta \alpha^{u+i_1b_1+i_2b_2})^{k_j} \\
&=& -\sum_{i_2=0}^{w-\delta-1}\tau_{i_2}\sum_{j=1}^{w-1}c_{j}(\theta\alpha^{u+i_2b_2})^{k_j}\sum_{i_1=0}^{\delta-2}\sigma_{i_1}(\alpha^{k_jb_1})^{i_1} \\
&=&-\sum_{i_2=0}^{w-\delta-1}\tau_{i_2}\sum_{j=1}^{w-1}c_{j}(\theta\alpha^{u+i_2b_2})^{k_j}f_1(\alpha^{k_jb_1})\\
&=&-\sum_{i_2=0}^{w-\delta-1}\tau_{i_2}\sum_{j=\delta-1}^{w-1}c_{j}(\theta\alpha^{u+i_2b_2})^{k_j}f_1(\alpha^{k_jb_1})\\
&=&-\sum_{j=\delta-1}^{w-1}c_{j}\theta^{k_j}\alpha^{uk_j}f_1(\alpha^{k_jb_1})\sum_{i_2=0}^{w-\delta-1}\tau_{i_2}(\alpha^{k_jb_2})^{i_2}\\
&=&-\sum_{j=\delta-1}^{w-1}c_{j}\theta^{k_j}\alpha^{uk_j}f_1(\alpha^{k_jb_1})f_2(\alpha^{k_jb_2})=0,
\end{eqnarray*}
where the 6-th equality holds since $f_1(\alpha^{k_jb_1})=0$ for any $1 \leq j \leq \delta-2$, and the last equality holds since $f_2(\alpha^{k_jb_2})=0$ for any $\delta-1 \leq j \leq w-1$. Thus it leads to a contradiction. So we have $w \geq \delta+\gamma,$ hence $d(C)\geq \delta+\gamma$. The lemma is proved.
\end{proof}

Recall that $C$ is called an $r$-LRC if and only if each symbol $c_i$ of any codeword $\bm c$ can be represented as a linear combination of $\{c_j\}_{j\in R_i\setminus \{i\}}$ for some $R_i\subset [n]$ with $i \in R_i$ and $|R_{i}| \leq r+1$.  It can be equivalently defined in a viewpoint of dual codes:
\begin{proposition}[\cite{HXSCFY20}]\label{parity}
Let $C$ be a linear code of length $n$. Then the $i$-th symbol of $C$ has locality $r$ if and only if there exists a dual codeword $\bm h\in C^{\perp}$  such that
\[i \in \textnormal{supp}(\bm h) \textnormal{ and } |\textnormal{supp}(\bm h)| \leq r+1.\]
\end{proposition}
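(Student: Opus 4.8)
The plan is to prove both directions of the equivalence directly, by translating between a recovery relation for coordinate $i$ and a low-weight dual codeword passing through coordinate $i$; nothing beyond linear algebra over $\mathbb{F}_q$ is needed.

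For the ``if'' direction, I would take a dual codeword $\bm h=(h_1,\dots,h_n)\in C^{\perp}$ with $i\in\textnormal{supp}(\bm h)$ and $|\textnormal{supp}(\bm h)|\leq r+1$, and put $R_i\triangleq\textnormal{supp}(\bm h)$; then $i\in R_i$ and $|R_i|\leq r+1$. Since $\bm h\in C^{\perp}$, every codeword $\bm c\in C$ satisfies $\sum_{j\in R_i}h_jc_j=0$, and because $h_i\neq 0$ this rearranges to $c_i=-h_i^{-1}\sum_{j\in R_i\setminus\{i\}}h_jc_j$, which exhibits $c_i$ as a fixed $\mathbb{F}_q$-linear combination of $\{c_j\}_{j\in R_i\setminus\{i\}}$. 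Hence coordinate $i$ has locality $r$ with repair group $R_i$.

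For the ``only if'' direction, I would start from the definition: there is a set $R_i\subseteq[n]$ with $i\in R_i$, $|R_i|\leq r+1$, and coefficients $\{a_j\}_{j\in R_i\setminus\{i\}}\subseteq\mathbb{F}_q$ with $c_i=\sum_{j\in R_i\setminus\{i\}}a_jc_j$ for every $\bm c\in C$ (here the coefficients may be taken independent of $\bm c$, since restricting the coordinate functionals to the subspace $C$ keeps everything $\mathbb{F}_q$-linear, and this is the only non-vacuous reading of the recovery condition for a linear code). Define $\bm h\in\mathbb{F}_q^{n}$ by $h_i\triangleq 1$, $h_j\triangleq-a_j$ for $j\in R_i\setminus\{i\}$, and $h_j\triangleq 0$ otherwise. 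Then $\sum_{j=1}^{n}h_jc_j=c_i-\sum_{j\in R_i\setminus\{i\}}a_jc_j=0$ for every $\bm c\in C$, so $\bm h\in C^{\perp}$; moreover $h_i=1\neq 0$ gives $i\in\textnormal{supp}(\bm h)$, and $\textnormal{supp}(\bm h)\subseteq R_i$ gives $|\textnormal{supp}(\bm h)|\leq|R_i|\leq r+1$, as desired.

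Since both implications are explicit constructions, I do not expect any genuine obstacle; the only subtle point worth a sentence is the codeword-independence of the recovery coefficients, which I addressed above. I would also remark that the argument shows one may always take the repair group of coordinate $i$ to be exactly the support of a minimum-weight dual codeword through coordinate $i$, which is the form in which this proposition will be used later.
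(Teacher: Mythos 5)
Your proof is correct. The paper states Proposition~\ref{parity} as a cited result from \cite{HXSCFY20} and gives no proof of its own, so there is nothing to compare against; your argument is the standard two-way translation between a recovery relation for coordinate $i$ and a dual codeword of weight at most $r+1$ supported on $i$, and your aside on the codeword-independence of the recovery coefficients (the coordinate functional $\bm c\mapsto c_i$ on $C$ factors linearly through the projection onto $R_i\setminus\{i\}$) correctly handles the only delicate point in reading the definition.
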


In this paper, an $(n,k,d;r)$-LRC means a linear code of length $n$, dimension $k$, minimum distance $d$ and locality $r$.

Due to their nice algebraic structures, cyclic and constacyclic codes have been widely used in the constructions of optimal LRCs (or $(r, \delta)$-LRCs), e.g.,  \cite{CFXF19,CXHF18,TB14}. The following proposition presented a sufficient condition to ensure that a constacyclic code is an LRC.
\begin{proposition}[\cite{CFXF19}]\label{locality}
Suppose $(r+1) \mid n$, $\gcd(n,q)=1$, $\lambda=\theta^n \in \mathbb{F}_q^{*}$ and $\alpha$ is a primitive $n$-th root of unity. Let $C=\langle g(x)\rangle$ be a constacyclic code of length $n$, where $g(x) \mid (x^n-\lambda)$ is the generator polynomial of $C$. If  $g(\theta\alpha^{j}\alpha^{i(r+1)})=0$, for $i=0,1,\dots, \frac{n}{r+1}-1$ and some integer $j$, then $C$ has locality $r$.
\end{proposition}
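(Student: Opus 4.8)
The plan is to produce, for every coordinate $t$, a codeword of $C^{\perp}$ of Hamming weight at most $r+1$ whose support contains $t$; by Proposition \ref{parity} this gives locality $r$ for the $t$-th symbol, and hence for $C$. Write $m=\frac{n}{r+1}$, $\beta=\theta\alpha^{j}$ and $\nu=\beta^{m}$. Since $\gcd(n,q)=1$, $\alpha^{r+1}$ is a primitive $m$-th root of unity, so the $m$ prescribed roots $\{\beta\alpha^{i(r+1)}:0\le i\le m-1\}$ of $g(x)$ are exactly the roots of $x^{m}-\nu$; I also record that $\nu^{r+1}=\beta^{n}=\theta^{n}\alpha^{jn}=\lambda$.

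First I would pass to a polynomial that is genuinely defined over $\mathbb{F}_{q}$. Because $g(x)$ divides the separable polynomial $x^{n}-\lambda$, it is divisible over $\mathbb{F}_{q}$ by the minimal polynomial of each of its roots. The union of the $q$-orbits of the $m$ elements $\beta\alpha^{i(r+1)}$ is $\bigcup_{\ell}\beta^{q^{\ell}}\mu_{m}$, where $\mu_{m}$ denotes the group of $m$-th roots of unity, and the product of $(x-a)$ over this union equals $m_{\nu}(x^{m})$, obtained by substituting $y=x^{m}$ into $m_{\nu}(y)=\prod_{\ell}(y-\nu^{q^{\ell}})$, the minimal polynomial of $\nu$ over $\mathbb{F}_{q}$. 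Hence $m_{\nu}(x^{m})\mid g(x)$ in $\mathbb{F}_{q}[x]$; and since $\nu^{r+1}=\lambda$ forces $m_{\nu}(y)\mid (y^{r+1}-\lambda)$, also $m_{\nu}(x^{m})\mid (x^{n}-\lambda)$.

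Next I would set $h(x)=\frac{x^{n}-\lambda}{g(x)}$ and $w(x)=\frac{x^{n}-\lambda}{m_{\nu}(x^{m})}=\Phi(x^{m})$, where $\Phi(y)=\frac{y^{r+1}-\lambda}{m_{\nu}(y)}\in\mathbb{F}_{q}[y]$ is monic of degree $r+1-\deg m_{\nu}\le r$. Then $w(x)$ is monic of degree $<n$, $wt(w)=wt(\Phi)\le r+1$, and $\mathrm{supp}(w)\subseteq\{0,m,2m,\dots,rm\}$. From $m_{\nu}(x^{m})\mid g(x)\mid (x^{n}-\lambda)$ I get $h(x)\mid w(x)$, hence $\overline{h}(x)\mid\overline{w}(x)$ by Lemma \ref{lem1}, where $\overline{w}(x)$ again has weight at most $r+1$, support inside $\{0,m,\dots,rm\}$, and nonzero constant term (since $w$ is monic). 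As $\deg\overline{w}<n$, Lemma \ref{lem2} yields $\overline{w}(x)\in\langle\overline{h}(x)\rangle=C^{\perp}$. Finally, since $C^{\perp}$ is $\lambda^{-1}$-constacyclic, for any $t\in\{0,1,\dots,n-1\}$ the $\lambda^{-1}$-constacyclic shift of $\overline{w}(x)$ by $t$ positions lies in $C^{\perp}$, has weight at most $r+1$, has support contained in $\{t,t+m,\dots,t+rm\}\pmod{n}$, and contains $t$ because $0\in\mathrm{supp}(\overline{w})$; Proposition \ref{parity} then completes the proof.

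The step I expect to be the main obstacle is this passage to $\mathbb{F}_{q}$: the naive parity check $\frac{x^{n}-\lambda}{x^{m}-\nu}=\sum_{i=0}^{r}\nu^{i}x^{m(r-i)}$ has weight $r+1$ but need not lie in $\mathbb{F}_{q}[x]$ when $\nu\notin\mathbb{F}_{q}$, so one must replace $x^{m}-\nu$ by $m_{\nu}(x^{m})$ and check that the quotient stays sparse enough (it does, in fact becoming sparser, as $\deg\Phi=r+1-\deg m_{\nu}$). Two smaller points also need care: that $C^{\perp}$ is generated by the reciprocal $\overline{h}$ rather than by $h$, which is why $\overline{w}$ rather than $w$ must be used; and that adding $m$ modulo $n$ permutes $\{0,m,\dots,rm\}$ because $(r+1)m=n$, so that every constacyclic shift of $\overline{w}$ keeps its support inside one candidate repair group of size $r+1$.
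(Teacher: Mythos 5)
Your proposal is correct, and its overall strategy is the same as the paper's: exhibit a sparse multiple of $h(x)=\frac{x^{n}-\lambda}{g(x)}$ supported on $\{0,\frac{n}{r+1},\dots,\frac{rn}{r+1}\}$, pass to reciprocals via Lemmas \ref{lem1}--\ref{lem2} to get a weight-$\le r+1$ codeword of $C^{\perp}=\langle\overline{h}(x)\rangle$, apply Proposition \ref{parity}, and spread the conclusion to all coordinates by constacyclic shifting. The one genuine difference is exactly the step you flagged: the paper divides by the binomial $x^{\frac{n}{r+1}}-\xi^{\frac{n}{r+1}}$ (with $\xi=\theta\alpha^{j}$) and uses $f(x)=\sum_{i=0}^{r}\xi^{(r-i)\frac{n}{r+1}}x^{i\frac{n}{r+1}}$, which has weight exactly $r+1$ but lies in $\mathbb{F}_{q}[x]$ only when $\nu=\xi^{\frac{n}{r+1}}\in\mathbb{F}_{q}$; this holds in all of the paper's applications (e.g.\ in Theorems \ref{thm3} and \ref{thm4} one checks $\theta^{\frac{n}{3}}\in\mathbb{F}_{q}^{*}$, resp.\ $\alpha^{\frac{n}{3}}\in\mathbb{F}_q$) but is not among the hypotheses of the proposition, and the paper's proof does not comment on it. Your substitute $w(x)=\Phi(x^{m})$ with $\Phi(y)=\frac{y^{r+1}-\lambda}{m_{\nu}(y)}$ is defined over $\mathbb{F}_{q}$ unconditionally, still has support in $\{0,m,\dots,rm\}$ and weight at most $r+1$ (which is all Proposition \ref{parity} needs), so your argument proves the proposition as literally stated, at the small cost of losing the exact weight $r+1$. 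All the supporting details check out: $m_{\nu}(x^{m})\mid g(x)$ because $g$ is separable and closed under conjugation, $\nu^{r+1}=\beta^{n}=\lambda$ gives $m_{\nu}(y)\mid y^{r+1}-\lambda$, and the constant term of $\overline{w}$ is nonzero since $w$ is monic, so the shifted supports do contain the target coordinate. (Minor remark: the explicit bookkeeping that $\{0,m,\dots,rm\}+t$ stays inside a single repair group is not needed for Proposition \ref{parity}, which only asks for a low-weight dual codeword whose support contains $t$.)
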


\begin{proof}
Since $g(\theta\alpha^{j}\alpha^{i(r+1)})=0$, for $i=0,1,\dots, \frac{n}{r+1}-1$, we have \[\prod_{i=0}^{\frac{n}{r+1}-1}(x-\theta\alpha^{j}\alpha^{i(r+1)}) \mid g(x),\]
i.e., $(x^{\frac{n}{r+1}}-(\theta\alpha^j)^{\frac{n}{r+1}}) \mid g(x)$. Denote $\xi =\theta\alpha^j$, then $\xi^n=\lambda$ and  $x^n-\lambda=x^n-\xi^n=(x^{\frac{n}{r+1}}-\xi^{\frac{n}{r+1}})f(x),$ where $f(x)=\sum_{i=0}^{r}x^{i\frac{n}{r+1}}\xi^{(r-i)\frac{n}{r+1}}.$ Then $wt(f(x))=r+1$ and 
\[h(x)=\frac{x^n-\lambda}{g(x)} \mid  \frac{x^n-\lambda}{x^{\frac{n}{r+1}}-\xi^{\frac{n}{r+1}}}=f(x).\]
From Lemma \ref{lem1}, we have $\overline{h}(x) \mid \overline{f}(x)$. Note that $C^{\perp}=\langle\overline{h}(x)\rangle$ by Lemma \ref{lem2} and $wt(\overline{f}(x))=wt(f(x))=r+1$ by Lemma \ref{lem1}. Thus $\overline{f}(x)$ is a codeword of $C^{\perp}$ of weight $r+1$. According to Proposition \ref{parity}, the first symbol has $r$-locality. Then we can deduce that all symbols of $C$ have $r$-locality since the constacyclicity of $C$.
\end{proof}
\section{Cyclic Singleton-Optimal LRCs}

\subsection{Cyclic Singleton-Optimal LRCs for even $q$}\label{III-A}
In this subsection, for $q$ is even and $3 \mid (q-1)$, we will give a construction of $q$-ary cyclic Singleton-optimal LRCs with length $n=3(q+1)$, minimum distance $d=6$ and locality $r=2$. Note that $n \mid (q^2-1)$, thus we can choose $\alpha \in \mathbb{F}_{q^2}$, which is a primitive $n$-th root of unity.

\begin{theorem}\label{even}
With the above notation, let
\[L=\{1, \alpha^{3}, \alpha^6, \cdots, \alpha^{3q}\}\]
and
\[D=\{\alpha, \alpha^{q-1}, \alpha^{q}, \alpha^{q+1}, \alpha^{q+2}\}.\]
Set
\[g(x)=\prod\limits_{a \in L \bigcup D}(x-a)\]
and $C=\langle g(x)\rangle$. Then  $C$  is a $q$-ary cyclic Singleton-optimal LRC with length $n=3(q+1)$, minimum distance $d=6$ and locality $r=2$.
\end{theorem}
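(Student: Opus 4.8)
The plan is to verify the three defining properties of the code $C=\langle g(x)\rangle$ in turn: that it is a genuine cyclic code of the claimed length, that it has locality $r=2$, that its minimum distance is exactly $d=6$, and finally that its dimension makes the Singleton-type bound \eqref{singleton} tight. First I would check that $g(x)\mid(x^n-1)$; since $n=3(q+1)$ divides $q^2-1$, all of $1,\alpha,\ldots,\alpha^{n-1}$ are $n$-th roots of unity in $\mathbb{F}_{q^2}$, so it suffices to observe that $L\cup D$ is a set of distinct powers of $\alpha$ (one must check $L\cap D=\emptyset$, which uses $3\mid(q-1)$ so that the exponents $3,6,\ldots,3q$ are all $\equiv 0\pmod 3$ while $1,q-1,q,q+1,q+2$ are not all divisible by $3$, and $q+1\equiv 0$ needs care — here $q$ even and $3\mid q-1$ pin things down) and that $g(x)$ has coefficients in $\mathbb{F}_q$, i.e. $L\cup D$ is closed under the Frobenius $a\mapsto a^q$. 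The set $L=\{\alpha^{3i}:0\le i\le q\}$ is the group of $(q+1)$-th roots of unity raised to... more precisely $\alpha^3$ has order $q+1$, and $x\mapsto x^q=x^{-1}$ on this group, so $L$ is Frobenius-closed; for $D$ one checks $\{1,q-1,q,q+1,q+2\}\cdot q \equiv \{q,q(q-1),q^2,q^2+q,q^2+2q\}\pmod n$ and simplifies using $q^2\equiv 1$ to get $\{q,\,q^2-q\equiv 1-q+n\ldots\}$ — the upshot should be that $D$ maps to $\{\alpha^{-1},\alpha^{1-q},\alpha^{-q}... \}$, i.e. to $\{\alpha^{n-1},\alpha^{n-q+1},\ldots\}=D$ after reduction mod $n$. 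This Frobenius-closedness bookkeeping is routine but must be done carefully.

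Next, for locality: $L=\{\alpha^{3i}:i=0,\ldots,q\}$ consists exactly of the roots $\alpha^{0\cdot(r+1)},\alpha^{1\cdot(r+1)},\ldots,\alpha^{(\frac{n}{r+1}-1)(r+1)}$ with $r+1=3$ and $\frac{n}{r+1}=q+1$, taking $j=0$ in Proposition \ref{locality}. Since $g(\alpha^{3i})=0$ for all these $i$, Proposition \ref{locality} immediately gives that $C$ has locality $r=2$ with disjoint local repair groups.

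For the minimum distance I would use the Hartmann-Tzeng bound (Lemma \ref{HT}) on the cyclic code $C$ (so $\theta=1$). The consecutive roots $\alpha^{q-1},\alpha^q,\alpha^{q+1},\alpha^{q+2}$ give four zeros at exponents $q-1,q,q+1,q+2$; I would also bring in a zero from $L$: since $3\mid(q-1)$ we have $q-1\equiv 0, q+2\equiv 0\pmod 3$, so $\alpha^{q-1},\alpha^{q+2}\in L$ as well, but more usefully $\alpha^{q-4}$ or $\alpha^{q+5}$... Actually the cleanest route: take $u=q-1$, $b_1=1$, so that $g(\alpha^{u+i_1})=0$ for $i_1=0,1,2,3$ (four consecutive zeros $q-1,q,q+1,q+2$), giving $\delta=5$; then find a shift $b_2$ with $\gcd(b_2,n)=1$ such that $g(\alpha^{u+i_1b_1+b_2})=0$ for $i_1=0,1,2,3$ too, which would bump the distance to $\delta+1=6$. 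Here one wants $\{q-1+b_2,q+b_2,q+1+b_2,q+2+b_2\}\pmod n$ to be zeros of $g$. A natural candidate is $b_2=3$ (but $\gcd(3,n)=3\ne1$, so that fails) — instead one hunts for $b_2$ with the four shifted exponents landing in $L\cup D$; since $L$ contains a long arithmetic progression with common difference $3$, combined with the five special exponents of $D$ there should be room. Alternatively, and perhaps more robustly, one proves $d\ge 6$ directly by combining the BCH bound from the $D$-block with the structure of $L$, or one argues $d=6$ is forced because the Singleton-type bound gives $d\le 6$ and any value $<6$ would contradict tightness once the dimension is computed; but to conclude $d=6$ one genuinely needs $d\ge 6$ from an algebraic lower bound. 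I expect this minimum-distance lower bound — choosing the right Hartmann-Tzeng (or BCH) parameters so that the zero pattern of $g$ over $L\cup D$ yields exactly $6$ — to be the main obstacle, since it requires exploiting the precise overlap between $L$ and $D$ and the hypotheses $q$ even, $3\mid(q-1)$.

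Finally, for Singleton-optimality I would compute $\deg g = |L\cup D| = (q+1) + 5 - |L\cap D|$. With $3\mid(q-1)$, both $q-1$ and $q+2$ are divisible by $3$, so $\alpha^{q-1},\alpha^{q+2}\in L$, giving $|L\cap D|=2$ (one should double-check $\alpha^{q-1}$ and $\alpha^{q+2}$ are distinct elements of $L$, i.e. $q-1\not\equiv q+2\pmod n$, clear for $n>3$), hence $\deg g = q+4$ and $k = n-\deg g = 3(q+1)-(q+4) = 2q-1$. Then I would check that with $n=3(q+1)$, $k=2q-1$, $d=6$, $r=2$ the bound \eqref{singleton} reads $n-k+2-\lceil k/r\rceil = 3(q+1)-(2q-1)+2-\lceil (2q-1)/2\rceil = (q+6) - q = 6$, confirming equality; note $\lceil(2q-1)/2\rceil = q$ since $q$ is an integer. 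Combining $d\le 6$ from \eqref{singleton} with $d\ge 6$ from the Hartmann-Tzeng step gives $d=6$ exactly, and the theorem follows.
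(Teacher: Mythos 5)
Your setup is fine: the computation of the generator polynomial (using $L\cap D=\{\alpha^{q-1},\alpha^{q+2}\}$, so $\deg g=q+4$ and $k=2q-1$), the locality claim via Proposition \ref{locality}, the BCH lower bound $d\ge 5$ from the run $\alpha^{q-1},\alpha^q,\alpha^{q+1},\alpha^{q+2}$, and the verification that $n-k+2-\lceil k/r\rceil=6$ all match the paper. The genuine gap is exactly where you anticipated trouble: the lower bound $d\ge 6$. Your proposed Hartmann--Tzeng route cannot be completed. The zero exponents of $g$ are the multiples of $3$ together with $\{1,\,q,\,q+1\}$ (note $q-1$ and $q+2$ are already multiples of $3$), so the only zeros $\equiv 1\pmod 3$ are $1$ and $q$, and the only zero $\equiv 2\pmod 3$ is $q+1$. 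With $u=q-1$, $b_1=1$, $\delta=5$, any admissible $b_2$ satisfies $\gcd(b_2,n)=1$, hence $b_2\not\equiv 0\pmod 3$; if $b_2\equiv 1$, the shifted run forces $q+b_2\equiv q+1$ and then $q+2+b_2=q+3$ would have to be a zero $\equiv 1\pmod 3$, which it is not; if $b_2\equiv 2$, two distinct shifted exponents would both have to equal $q+1$. A similar sparsity analysis kills the other BCH/HT configurations. There is also a structural reason no such argument can work: these bounds see only the zero pattern, and Proposition \ref{prop3} of the paper exhibits, for odd $q$ with the same pattern, an explicit weight-$5$ codeword, so the distance genuinely depends on $q$ being even, which BCH/HT cannot detect. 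Your fallback remark that $d<6$ would "contradict tightness" is circular, as you yourself note.

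The paper closes this gap by a direct, bespoke argument ruling out weight-$5$ codewords: for a putative codeword $c(x)=c_0+c_1x^{k_1}+\cdots+c_4x^{k_4}$, it sums the evaluations $c(\alpha^{3i})$ over $i=0,\dots,q$ to show that the index set $I=\{\ell:(q+1)\mid k_\ell\}$ has $|I|\ge 2$, then splits into the cases $|I|=3$ and $|I|=2$. In each case, combining $c(1)=0$, $c(\alpha^3)=0$ (and in Case 2 also $c(\alpha^6)=0$ via a Vandermonde determinant and a second power-sum evaluation), with $c(\alpha)=0$ and the fact that $\alpha^{q+1}\in\mathbb{F}_q$, one is forced to conclude $\alpha^{k_j}\in\mathbb{F}_q^*$ for some exponent $k_j$ with $(q+1)\nmid k_j$; since $q$ is even, $\gcd(q+1,q-1)=1$, so $n\mid k_j(q-1)$ implies $(q+1)\mid k_j$, a contradiction. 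This is the step your proposal is missing, and it is precisely where the hypothesis that $q$ is even enters; without an argument of this kind (or some substitute exploiting the field structure rather than just the zero positions), the proof of $d=6$ is incomplete.
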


\begin{proof}
Since $\alpha^{3(q+1)}=1$, we have  
\[\prod_{a\in L}(x-a)=\prod_{i=0}^{q}(x-\alpha^{3i})=x^{q+1}-1.\] 
Since $3 \mid (q-1)$, we have $L\bigcap D= \{\alpha^{q-1}, \alpha^{q+2}\}$, thus \[g(x)=(x^{q+1}-1)(x-\alpha)(x-\alpha^{q})(x-\alpha^{q+1}).\]
Since $\alpha^{q+1} \in \mathbb{F}_{q}$, thus
$g(x) \in \mathbb{F}_{q}[x]$ and $k=\dim(C)=n-\deg(g(x))=2q-1$. According to the set $L$,  $C$ has locality $r=2$ by Proposition \ref{locality}. Since $\alpha^{q-1}, \alpha^{q}, \alpha^{q+1}, \alpha^{q+2}$ are roots of $g(x)$, we have $d \geq 5$ by Lemma \ref{BCH}. On the other hand, by the Singleton-type bound \eqref{singleton}, we have
\[d \leq n-k+2-\left\lceil k/r\right\rceil=3(q+1)-(2q-1)+2-q=6.\]
Thus we only need to prove that $d \neq 5$. By contradiction, we suppose $d=5$. Let \[c(x)=c_0+c_1x^{k_1}+c_2x^{k_2}+c_3x^{k_3}+c_4x^{k_4}\]
be a codeword of $C$ with $c_0, c_1, c_2, c_3, c_4 \in \mathbb{F}_{q}^{*}$ and $k_0=0 < k_1, k_2,  k_3, k_4 \leq n-1$, where $k_1, k_2,  k_3, k_4$ are pairwise distinct. Then $g(x) \mid c(x)$.  Let
\[I=\{\ell : 0 \leq \ell \leq 4 \textnormal{ and }(q+1) \mid k_{\ell} \}.\] Since $k_0=0<k_1, k_2,  k_3, k_4 \leq n-1=3(q+1)-1,$ we have  $1\leq |I| \leq 3$. Note that $\alpha$ is a primitive $n$-th root of unity, we can show that
\begin{equation}\label{5}
            \sum_{i=0}^{q}(\alpha^{3i})^{k_{\ell}}=\left\{
            \begin{aligned}
                & 1, ~\textnormal{ for } \ell \in I, \\
                & 0 ,~ \textnormal{ for } \ell \notin I.
            \end{aligned}\right.
        \end{equation}
For any $0 \leq i \leq q,$ we have $c(\alpha^{3i})=0$. Thus by Eq. \eqref{5}
\begin{eqnarray*}
  0 &=& \sum_{i=0}^{q}c(\alpha^{3i})=\sum_{i=0}^{q}\sum_{\ell=0}^{4}c_{\ell}(\alpha^{3i})^{k_{\ell}} \\
    &=& \sum_{\ell=0}^{4}c_{\ell} \sum_{i=0}^{q}(\alpha^{3i})^{k_{\ell}}=\sum_{\ell \in I}c_{\ell},
\end{eqnarray*}
which implies that $|I| \geq 2$.

\textbf{Case 1}: $|I|=3$. Without loss of generality, we suppose \[I=\{0,1,2\},  k_1=q+1, k_2=2(q+1), \textnormal{ and } (q+1) \nmid k_3 .\]
Thus $c_0+c_1+c_2=0$. By $c(1)=0$, we have \[c_0+c_1+c_2+c_3+c_4=0 \Rightarrow c_3+c_4=0.\]
By $c(\alpha^3)=0$, we have
\[c_0+c_1(\alpha^3)^{q+1}+c_2(\alpha^3)^{2(q+1)}+c_3(\alpha^3)^{k_3}+c_4(\alpha^3)^{k_4}=0\]
\[\Rightarrow c_3\alpha^{3k_3}+c_4\alpha^{3k_4}=0 \Rightarrow \alpha^{3k_3}=\alpha^{3k_4}\]
\[\Rightarrow (q+1) \mid (k_4-k_3).\]
By $c(\alpha)=0$, we have \[c_0+c_1\alpha^{q+1}+c_2\alpha^{2(q+1)}+c_3\alpha^{k_3}+c_4\alpha^{k_4}=0 \]
\[\Rightarrow c_3\alpha^{k_3}+c_4\alpha^{k_4} \in \mathbb{F}_{q} \Rightarrow c_3\alpha^{k_3}(1-\alpha^{k_4-k_3}) \in \mathbb{F}_{q}.\]
Since $(q+1) \mid (k_4-k_3)$ and $n \nmid (k_4-k_3)$,  we have $\alpha^{k_4-k_3} \neq 1 \in \mathbb{F}_{q}$. Thus \[\alpha^{k_3}\in \mathbb{F}^{*}_{q} \Rightarrow \alpha^{k_3(q-1)}=1 \Rightarrow n \mid k_3(q-1).\]
Since $q$ is even, $\gcd(\frac{n}{3},q-1)=\gcd(q+1, q-1)=1$. Thus $\frac{n}{3} \mid k_3$, which is a contradiction.

\textbf{Case 2}: $|I|=2$. Without loss of generality, we suppose \[I=\{0,1\} \textnormal{ and } (q+1) \mid k_1, (q+1) \nmid k_2, k_3,k_4.\]
Thus $c_0+c_1=0$. Similarly, from $c(1)=0$, we have $c_2+c_3+c_4=0$. From $c(\alpha^3)=c(\alpha^6)=0$, we deduce that
\[c_2\alpha^{3k_2}+c_3\alpha^{3k_3}+c_4\alpha^{3k_4}=0,\]
\[ c_2\alpha^{6k_2}+c_3\alpha^{6k_3}+c_4\alpha^{6k_4}=0.\]
Thus
\[\det\left(
        \begin{array}{ccc}
          1 & 1 & 1 \\
          \alpha^{3k_2} & \alpha^{3k_3} & \alpha^{3k_4} \\
          \alpha^{6k_2} & \alpha^{6k_3} & \alpha^{6k_4} \\
        \end{array}
      \right)=0,\]
which implies that $\alpha^{3k_2}=\alpha^{3k_3}$, or $\alpha^{3k_2}=\alpha^{3k_4}$, or $\alpha^{3k_3}=\alpha^{3k_4}$. Without loss of generality, we assume that $\alpha^{3k_2}=\alpha^{3k_3}$. We next prove that $\alpha^{3k_2}=\alpha^{3k_3}=\alpha^{3k_4}$, otherwise, suppose that $\alpha^{3k_2}=\alpha^{3k_3} \neq \alpha^{3k_4}$. Then for $\ell=1,2,3,$ \[\sum_{i=0}^{q}\alpha^{3i(k_{\ell}-k_4)}=0.\]
From $c(\alpha^{3i})=0$ $(i=0,1,2, \cdots, q)$, we have
\begin{eqnarray*}
  0 &=& \sum_{i=0}^{q}\alpha^{-3ik_4}c(\alpha^{3i}) \\
    &=& \sum_{i=0}^{q}\alpha^{-3ik_4}\sum_{\ell=0}^{4}c_{\ell}\alpha^{3ik_{\ell}}  \\
    &=&  \sum_{\ell=0}^{4}c_{\ell}\sum_{i=0}^{q}\alpha^{3i(k_{\ell}-k_4)} \\
    &=& c_4,
\end{eqnarray*}
which is a contradiction. So $\alpha^{3k_2}=\alpha^{3k_3}=\alpha^{3k_4} \Rightarrow (q+1) \mid (k_2-k_4)$ and $(q+1) \mid (k_3-k_4)$. From $c(\alpha)=0$, we have \[c_0+c_1\alpha^{k_1}+c_2\alpha^{k_2}+c_3\alpha^{k_3}+c_4\alpha^{k_4}=0.\]
Since $\alpha \in \mathbb{F}_{q^2}$ and $(q+1) \mid k_1$, we have $c_0+c_1\alpha^{k_1} \in \mathbb{F}_{q}$. Since $c_0+c_1=0,$  $c_0+c_1\alpha^{k_1}=c_0(1-\alpha^{k_1}) \in \mathbb{F}^{*}_{q}$. Thus \[c_2\alpha^{k_2}+c_3\alpha^{k_3}+c_4\alpha^{k_4} \in \mathbb{F}^{*}_{q},\]
i.e., 
\[c_4\alpha^{k_4}(1+\frac{c_3}{c_4}\alpha^{k_3-k_4}+\frac{c_2}{c_4}\alpha^{k_2-k_4}) \in \mathbb{F}^{*}_{q}.\]
Since $(q+1) \mid (k_2-k_4)$ and $(q+1) \mid (k_3-k_4)$, we have $\alpha^{k_3-k_4}, \alpha^{k_2-k_4} \in \mathbb{F}_{q}$, hence $\alpha^{k_4} \in \mathbb{F}^{*}_{q}$. Then $n \mid k_4(q-1)$. Note that $\gcd(\frac{n}{3}, q-1)=\gcd(q+1,q-1)=1$, thus $(q+1) \mid k_4$, which is a contradiction.

In summary, $d=6$ and hence $C$ is a $q$-ary cyclic Singleton-optimal LRC with length $n=3(q+1)$, minimum distance $d=6$ and locality $r=2$.
\end{proof}

\begin{remark}
To the best of our knowledge, Theorem \ref{even} is the first construction of $q$-ary cyclic Singleton-optimal LRCs with length $n>q+1$ and minimum distance  $d \geq 5$.
\end{remark}

\begin{example}
Suppose $q=4$ and $n=3(q+1)=15$. Let $\omega$ be a primitive element of $\mathbb{F}_{4}$ and $\alpha \in \mathbb{F}_{16}$ which is a root of the primitive polynomial $x^2+x+\omega \in \mathbb{F}_4[x]$. Then $\alpha+\alpha^4=1$ and $\alpha^5=\omega$. Let
\begin{eqnarray*}
g(x) &=& (x^5-1)(x-\alpha)(x-\alpha^4)(x-\alpha^5) \\
  &=& (x^5-1)(x^2-x+\omega)(x-\omega) \in \mathbb{F}_{4}[x]
\end{eqnarray*}
and $C$ be the cyclic code of length 15 with the generator polynomial $g(x)$. With the help of Magma, we can calculate that $d=6$, which coincides with Theorem \ref{even}.
\end{example}

The condition of $q$ is even is necessary in Theorem \ref{even}. Indeed,  we show that when $q$ is odd, the minimum distance of the LRCs constructed in Theorem \ref{even} is equal to  5.

\begin{proposition}\label{prop3}
Suppose $q$ is odd and $3 \mid (q-1)$. Let $C$ be the cyclic code of length $n=3(q+1)$ constructed as in Theorem \ref{even}. Then the minimum distance of $C$ is 5.
\end{proposition}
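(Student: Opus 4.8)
The plan is to first observe that nothing in the opening paragraph of the proof of Theorem~\ref{even} uses the parity of $q$: since $3\mid(q-1)$ we again have $L\cap D=\{\alpha^{q-1},\alpha^{q+2}\}$, hence $g(x)=(x^{q+1}-1)(x-\alpha)(x-\alpha^{q})(x-\alpha^{q+1})$, $\deg g=q+4$, $\dim C=2q-1$, and $\alpha^{q-1},\alpha^{q},\alpha^{q+1},\alpha^{q+2}$ are four consecutive roots of $g$, so Lemma~\ref{BCH} gives $d(C)\ge5$. Thus the whole task reduces to exhibiting one codeword of weight $5$. This is exactly where the proof of Theorem~\ref{even} failed for odd $q$: there the contradiction relied on $\gcd(q+1,q-1)=1$, whereas for odd $q$ one has $\gcd(q+1,q-1)=2$, so $(q+1)/2$ is an integer and the ``missing'' weight-$5$ word now exists.

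The word I would use has as support the five-element arithmetic progression
\[
\Bigl\{\,0,\ \tfrac{q+1}{2},\ q+1,\ \tfrac{3(q+1)}{2},\ 2(q+1)\,\Bigr\}\subseteq\{0,1,\dots,n-1\}.
\]
Put $Z=x^{(q+1)/2}$, so such a word is $c(x)=p(Z)$ for a polynomial $p$ of degree at most $4$. Let $\beta:=\alpha^{q+1}$, a primitive cube root of unity, which lies in $\mathbb{F}_q$ because $3\mid q-1$; and let $\gamma:=\alpha^{(q+1)/2}$, so that $\gamma^2=\alpha^{q+1}=\beta$ and $\gamma^3=\alpha^{n/2}=-1$, hence $\gamma$ has order $6$, satisfies $\gamma^2-\gamma+1=0$, and therefore $\gamma=\beta+1\in\mathbb{F}_q$. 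I would then check that, for a word $c(x)=p(Z)$ on this support, $g(x)\mid c(x)$ holds if and only if $p$ vanishes at the four distinct points $1,-1,\beta,\gamma$: every $(q+1)$-th root of unity $\zeta$ satisfies $\zeta^{(q+1)/2}\in\{1,-1\}$, so $p(1)=p(-1)=0$ forces $(x^{q+1}-1)\mid c(x)$; from $q\equiv1\pmod3$ one gets $(q+1)/2\equiv1\pmod3$, so at $x=\alpha^{q+1}=\beta$ we have $Z=\beta$ and $c(\alpha^{q+1})=p(\beta)=0$; at $x=\alpha$ we have $Z=\gamma$ and $c(\alpha)=p(\gamma)=0$, whence $c(\alpha^{q})=c(\alpha)^{q}=0$ automatically since $c\in\mathbb{F}_q[x]$. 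As $1,-1,\beta,\gamma$ are pairwise distinct and $\deg p\le4$, I take $p(Z)=(Z-1)(Z+1)(Z-\beta)(Z-\gamma)$.

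It remains to confirm this word has weight exactly $5$. Using $\beta\gamma=\gamma^3=-1$ and $\beta+\gamma=2\beta+1$ one expands
\[
p(Z)=(Z^2-1)\bigl(Z^2-(2\beta+1)Z-1\bigr)=Z^4-(2\beta+1)Z^3-2Z^2+(2\beta+1)Z+1,
\]
so that $c(x)=1+(2\beta+1)x^{(q+1)/2}-2x^{q+1}-(2\beta+1)x^{3(q+1)/2}+x^{2(q+1)}\in\mathbb{F}_q[x]$. Since $q$ is odd and $3\mid q-1$, the characteristic of $\mathbb{F}_q$ is neither $2$ nor $3$; hence $2\ne0$, and $2\beta+1\ne0$ (otherwise $\beta=-1/2$ and $0=\beta^2+\beta+1=3/4$, impossible in characteristic $\ne3$), so all five coefficients are nonzero. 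As the five exponents are distinct and lie in $\{0,\dots,n-1\}$, we get $wt(c(x))=5$, and together with $d(C)\ge5$ this yields $d(C)=5$.

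I do not expect a serious obstacle: the content is in guessing the right support — the progression with common difference $(q+1)/2$ — after which the argument is routine bookkeeping of exponents modulo $3$ and modulo $q+1$ that collapses ``$g\mid c$'' to the vanishing of $p$ at four points. The only mildly delicate points are verifying $\gamma=\beta+1\in\mathbb{F}_q$ (which uses that $q$ is odd together with $3\mid q-1$) and excluding characteristics $2$ and $3$ so that none of the five coefficients degenerates.
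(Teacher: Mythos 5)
Your proof is correct and follows essentially the same route as the paper: the BCH lower bound $d\ge 5$ together with an explicit weight-$5$ codeword supported on the arithmetic progression $\{0,\tfrac{q+1}{2},q+1,\tfrac{3(q+1)}{2},2(q+1)\}$. In fact your codeword coincides with the paper's (with $\pi=\gamma$ one has $(2\pi-1)^2=-3$, so $2\beta+1=2\pi-1=-\tfrac{3}{2\pi-1}$); you merely derive it more systematically, as $(Z-1)(Z+1)(Z-\beta)(Z-\gamma)$ in $Z=x^{(q+1)/2}$, where the paper states it and verifies the roots directly.
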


\begin{proof}
From the proof of Theorem \ref{even}, the generator polynomial of $C$ is \[g(x)=(x^{q+1}-1)(x-\alpha)(x-\alpha^{q})(x-\alpha^{q+1}),\]
where $\alpha \in \mathbb{F}_{q^{2}}$ is a primitive $n$-th root of unity. Denote $\pi=\alpha^{\frac{q+1}{2}}\in \mathbb{F}_q$. Then $\pi^3=-1$ and $\pi^2-\pi+1=0$. Let
\[c(x)=1-2x^{q+1}+x^{2(q+1)}-\frac{3}{2\pi-1}x^{\frac{q+1}{2}}+\frac{3}{2\pi-1}x^{\frac{3(q+1)}{2}} \in \mathbb{F}_q[x].\] Then $wt(c(x))=5$. From $\alpha$ is a primitive $n$-th root of unity, for $i=0,1,\dots,q$, we have
\[c(\alpha^{3i})=1-2+1-\frac{3}{2\pi-1}(-1)^i+\frac{3}{2\pi-1}(-1)^{3i}=0.\]
Moreover, 
\begin{eqnarray*}
c(\alpha)&=&1-2\pi^2+\pi^4-\frac{3}{2\pi-1}\pi+\frac{3}{2\pi-1}\pi^3\\
&=&1-2(\pi-1)-\pi-\frac{3(\pi+1)}{2\pi-1} \\
&=& \frac{3((1-\pi)(2\pi-1)-(\pi+1))}{2\pi-1} \\
&=& \frac{-6(\pi^2-\pi+1)}{2\pi-1}=0,
\end{eqnarray*}
\[c(\alpha^q)=(c(\alpha))^q=0,\]
and from $3 \mid (q-1)$, we have
\begin{eqnarray*}
c(\alpha^{q+1})=c(\pi^2)&=&1-2\pi^{2(q+1)}+\pi^{4(q+1)}-\frac{3}{2\pi-1}\pi^{q+1}+\frac{3}{2\pi-1}\pi^{3(q+1)}\\
&=&1+2\pi+\pi^2-\frac{3}{2\pi-1}\pi^2+ \frac{3}{2\pi-1}\\
&=& 3\pi-\frac{3\pi-6}{2\pi-1}=\frac{3\pi(2\pi-1)-(3\pi-6)}{2\pi-1} \\
&=& \frac{6(\pi^2-\pi+1)}{2\pi-1}=0.
\end{eqnarray*}

Thus $g(x) \mid c(x)$ and hence $c(x)$ is a nonzero codeword of $C$, which implies that  $d \leq 5$. Since $\alpha^{q-1}, \alpha^{q}, \alpha^{q+1}, \alpha^{q+2}$ are roots of $g(x)$, by the BCH bound, $d \geq 5$. Thus $d=5$. The proof is completed.
\end{proof}

\begin{example}
Suppose $q=7$ and $n=3(q+1)=24$. Let $\alpha \in \mathbb{F}_{49}$ be a primitive 24-th root of unity. Let $g(x) = (x^8-1)(x-\alpha)(x-\alpha^7)(x-\alpha^8)=x^{11} + 3x^{10} + 6x^9 + 3x^8 + 6x^3 + 4x^2 + x + 4  \in \mathbb{F}_{7}[x]$
and $C$ be the cyclic code of length 24 with generator polynomial $g(x)$. With the help of Magma, we can calculate that $d=5$, which coincides with Proposition \ref{prop3}.
\end{example}

\subsection{Cyclic Singleton-optimal LRCs for $q\equiv 1 (\bmod~4)$}
In Subsection \ref{III-A}, we give a construction of $q$-ary cyclic Singleton-optimal LRCs of length $n=3(q+1)$ for $q$ is even. And we also have proved that this construction is not optimal for $q$ is odd in Proposition \ref{prop3}. In this subsection, we construct a family $q$-ary Singleton-optimal LRCs of length $n=\frac{3(q+1)}{2}$ for $q\equiv 1 (\bmod~4)$ and $3 \mid (q-1)$.

\begin{theorem}\label{odd}
Suppose $q\equiv 1 (\bmod~4)$ and $3 \mid (q-1)$. Let $n=\frac{3(q+1)}{2}$ and $\alpha \in \mathbb{F}_{q^{2}}$ be a primitive $n$-th root of unity. Let
\[g(x)=(x^{\frac{n}{3}}-1)(x-\alpha)(x-\alpha^{q})(x-\alpha^{q+1}).\] Then the cyclic code $C\triangleq\langle g(x)\rangle$  is a $q$-ary cyclic Singleton-optimal LRC with length $n$, minimum distance $d=6$ and locality $r=2$.
\end{theorem}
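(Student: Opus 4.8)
The plan is to run the argument of the proof of Theorem~\ref{even}, with $n=\tfrac{3(q+1)}{2}$ and $\tfrac{n}{3}=\tfrac{q+1}{2}$ now playing the roles that $3(q+1)$ and $q+1$ played there. First I would record the basic data of $C$. Since $3\mid(q-1)$ we have $\tfrac{q^{2}-1}{n}=\tfrac{2(q-1)}{3}\in\mathbb{Z}$, so $n\mid q^{2}-1$ and the primitive $n$-th root $\alpha\in\mathbb{F}_{q^{2}}$ exists; since the characteristic $p$ divides neither $3$ (because $q\equiv 1\pmod 3$) nor $q+1$, also $\gcd(n,q)=1$. Then $g(x)\in\mathbb{F}_{q}[x]$: the factor $x^{n/3}-1=\prod_{i=0}^{n/3-1}(x-\alpha^{3i})$ is obviously over $\mathbb{F}_q$, while $(x-\alpha)(x-\alpha^{q})=x^{2}-(\alpha+\alpha^{q})x+\alpha^{q+1}$ and $x-\alpha^{q+1}$ have $\mathbb{F}_q$-coefficients because $\alpha+\alpha^{q}$ and $\alpha^{q+1}$ are the trace and norm of $\alpha$ over $\mathbb{F}_{q}$. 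As $q\equiv 1\pmod 3$, the exponents $1,q,q+1$ are all $\not\equiv 0\pmod 3$, so $\alpha,\alpha^{q},\alpha^{q+1}$ are not $(n/3)$-th roots of unity and are pairwise distinct; hence $\deg g=\tfrac{q+1}{2}+3$ and $k=\dim C=n-\deg g=q-2$. Proposition~\ref{locality} applied to the root set $\{\alpha^{3i}\}_{i}$ (note $3\mid n$) gives locality $r=2$, and~\eqref{singleton} yields $d\le n-k+2-\lceil k/2\rceil=\tfrac{q+7}{2}+2-\tfrac{q-1}{2}=6$. For the lower bound, $3\mid(q-1)$ makes $\alpha^{q-1}$ and $q\equiv 1\pmod 3$ makes $\alpha^{q+2}$ roots of $x^{n/3}-1$, hence of $g$; together with the explicit roots $\alpha^{q},\alpha^{q+1}$ of $g$ we get four roots with consecutive exponents $q-1,q,q+1,q+2$, so the BCH bound (Lemma~\ref{BCH}) gives $d\ge 5$. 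It remains to prove $d\neq 5$.

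Suppose $c(x)=\sum_{\ell=0}^{4}c_{\ell}x^{k_{\ell}}$ is a weight-$5$ codeword, $c_{\ell}\in\mathbb{F}_{q}^{*}$, $0=k_{0}<k_{1}<\dots<k_{4}\le n-1$, so $g(x)\mid c(x)$. Put $m=\tfrac{n}{3}=\tfrac{q+1}{2}$ and $I=\{\ell:m\mid k_{\ell}\}$; since $\{0,\dots,n-1\}$ contains only the three multiples $0,m,2m$ of $m$, and $0\in I$, we have $1\le|I|\le 3$. Because $\alpha^{3}$ is a primitive $m$-th root of unity and $m$ is nonzero in $\mathbb{F}_q$ (it equals $2^{-1}$ there, as $q=0$ in $\mathbb{F}_q$), the relation $0=\sum_{i=0}^{m-1}c(\alpha^{3i})$ reduces to $m\sum_{\ell\in I}c_{\ell}=0$, whence $\sum_{\ell\in I}c_{\ell}=0$ and $|I|\ge 2$. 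I would then treat $|I|=3$ and $|I|=2$ exactly as Cases~1 and~2 of Theorem~\ref{even}. Evaluating $c$ at $1,\alpha^{3},\alpha^{6}$ (together with $\sum_{\ell\in I}c_{\ell}=0$) forces the powers $\alpha^{3k_{\ell}}$, $\ell\notin I$, to all coincide --- a two-term argument when $|I|=3$, and a $3\times 3$ Vandermonde determinant plus the normalized relation $\sum_{i=0}^{m-1}\alpha^{-3ik_{4}}c(\alpha^{3i})=0$ to upgrade ``two equal'' to ``all equal'' when $|I|=2$ --- so those exponents become congruent modulo $m$. Then evaluating at $\alpha$, using that $\alpha^{m}$ is a \emph{primitive cube root of unity lying in $\mathbb{F}_{q}$} (here $3\mid q-1$) and that $\alpha^{q+1}\in\mathbb{F}_{q}$, one isolates a term $c_{\ell}\alpha^{k_{\ell}}$ with $\ell\notin I$ that is forced into $\mathbb{F}_{q}^{*}$; hence $\alpha^{k_{\ell}}\in\mathbb{F}_{q}^{*}$ and $n\mid k_{\ell}(q-1)$.

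The one new ingredient, and the place where the hypothesis $q\equiv 1\pmod 4$ is essential, is the arithmetic fact $\gcd(m,q-1)=\gcd\!\big(\tfrac{q+1}{2},\,q-1\big)=1$: when $q\equiv 1\pmod 4$ the integer $\tfrac{q+1}{2}$ is odd, and any common divisor of $\tfrac{q+1}{2}$ and $q-1$ also divides $q+1=2\cdot\tfrac{q+1}{2}$, hence divides $(q+1)-(q-1)=2$, hence is $1$. Consequently $\gcd(n,q-1)=\gcd(3m,q-1)=\gcd(3,q-1)=3$, so $n\mid k_{\ell}(q-1)$ forces $m\mid k_{\ell}$, i.e. $\ell\in I$, contradicting $\ell\notin I$. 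This gives $d\neq 5$, hence $d=6$ and $C$ is a Singleton-optimal $(n,k,6;\,r=2)$-LRC. The main obstacle is not conceptual but the bookkeeping inside the two cases: one has to keep precise track of which residue class modulo $m$ each difference $k_i-k_j$ lies in, so that each factor of the shape $1-\alpha^{k_i-k_j}$ (or partial sum $c_i\alpha^{k_i}+c_j\alpha^{k_j}+\cdots$) is demonstrably a \emph{nonzero} element of $\mathbb{F}_{q}$; substituting $\tfrac{q+1}{2}$ for $q+1$ everywhere and invoking the coprimality fact above are exactly what make the odd-$q$ case go through at this shorter length.
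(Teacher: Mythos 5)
Your proposal is correct and follows essentially the same route as the paper's own proof: locality from Proposition \ref{locality}, $d\ge 5$ from the BCH bound on the four consecutive roots $\alpha^{q-1},\alpha^{q},\alpha^{q+1},\alpha^{q+2}$, $d\le 6$ from the Singleton-type bound, and the exclusion of $d=5$ by the same $|I|\in\{2,3\}$ case analysis as in Theorem \ref{even}, with the decisive new fact being $\gcd\bigl(\tfrac{q+1}{2},q-1\bigr)=1$ for $q\equiv 1\pmod 4$. Your extra care with the factor $\tfrac{n}{3}=2^{-1}\neq 0$ in the character sum and with the degree/dimension bookkeeping is a welcome refinement of details the paper leaves implicit.
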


\begin{proof}
From $(x^{\frac{n}{3}}-1) \mid g(x)$, we know that $1, \alpha^{3}, \alpha^{6}, \cdots, \alpha^{n-3}$ are roots of $g(x)$. By Proposition \ref{locality}, $C$ has locality $r=2$. Since $3 \mid (q-1)$ and $\alpha^{q-1}, \alpha^{q}, \alpha^{q+1}, \alpha^{q+2}$ are roots of $g(x)$, we have $d \geq 5$ by Lemma \ref{BCH}. The dimension $k$ of $C$ is equal to $n-\deg(g(x))=q-2$. By the Singleton-type bound, 
\[d \leq \frac{3(q+1)}{2}-(q-2)+2-\lceil\frac{q-2}{2}\rceil=6.\] Thus we only need to prove that $d \neq 5$. By contradiction, let \[c(x)=c_0x^{k_0}+c_1x^{k_1}+c_2x^{k_2}+c_3x^{k_3}+c_4x^{k_4}\]
be a codeword of $C$ with $c_0, c_1, c_2, c_3, c_4 \in \mathbb{F}_{q}^{*}$ and $0=k_0< k_1< k_2<  k_3< k_4 \leq n-1$. Let
\[I=\{\ell: 0 \leq \ell \leq 4 \textnormal{ and }\frac{n}{3} \mid  k_{\ell}\}.\]
Then $1\leq |I| \leq 3$. Similar to the proof of Theorem \ref{even}, we have 
\[\sum_{\ell \in I}c_{\ell}=0,\]
and hence $|I| \geq 2$.

\textbf{Case 1}: $|I|=3$. Without loss of generality, we suppose \[I=\{0,1,2\}, k_1=\frac{n}{3}, k_2=\frac{2n}{3} \textnormal{ and }\frac{n}{3} \nmid k_3, k_4.\]
Thus $c_0+c_1+c_2=0$. By $c(1)=0$, we have \[c_0+c_1+c_2+c_3+c_4=0 \Rightarrow c_3+c_4=0.\]
By $c(\alpha^3)=0$, we have
\[c_0+c_1(\alpha^3)^{\frac{n}{3}}+c_2(\alpha^3)^{\frac{2n}{3}}+c_3(\alpha^3)^{k_3}+c_4(\alpha^3)^{k_4}=0,\]
\[\Rightarrow c_3\alpha^{3k_3}+c_4\alpha^{3k_4}=0 \Rightarrow \alpha^{3k_3}=\alpha^{3k_4}\Rightarrow \frac{n}{3} \mid (k_4-k_3).\]
By $c(\alpha)=0$, we have \[c_0+c_1\alpha^{\frac{n}{3}}+c_2\alpha^{\frac{2n}{3}}+c_3\alpha^{k_3}+c_4\alpha^{k_4}=0. \]
Note that $(\alpha^{\frac{n}{3}})^{q-1}=1$, thus $\alpha^{\frac{n}{3}} \in \mathbb{F}_{q}$.
\[\Rightarrow c_3\alpha^{k_3}+c_4\alpha^{k_4} \in \mathbb{F}_{q} \Rightarrow c_3\alpha^{k_3}(1-\alpha^{k_4-k_3}) \in \mathbb{F}_{q}.\]
Since $\frac{n}{3} \mid (k_4-k_3)$ and $n \nmid (k_4-k_3)$,  we have $\alpha^{k_4-k_3} \neq 1$ and $\alpha^{k_4-k_3}-1 \in \mathbb{F}_{q}$. Thus 
\[\alpha^{k_3}\in \mathbb{F}_{q} \Rightarrow n \mid k_3(q-1).\]
Note that $\gcd(\frac{n}{3}, q-1)=\gcd(\frac{q+1}{2}, q-1)=1$ by $q \equiv 1 (\bmod~4)$. Thus $\frac{n}{3} \mid k_3$, which is a contradiction.

\textbf{Case 2}: $|I|=2$. The proof of this case is completely similar to the above case and the proof of Theorem \ref{even}, so we omit it.

The theorem is proved.

\end{proof}
\begin{remark}
Suppose $q$ is odd and $3 \mid (q-1)$. If $q+1=2^a\cdot b$, where $b$ is an odd, then let $n=\frac{3(q+1)}{2^a}=3b$ and $\alpha \in \mathbb{F}_{q^{2}}$ be a primitive $n$-th root of unity. Let
\[g(x)=(x^{\frac{n}{3}}-1)(x-\alpha)(x-\alpha^{q})(x-\alpha^{q+1}).\]
Similar to the proof of Theorem \ref{odd},  we can show that $C\triangleq\langle g(x)\rangle$  is a $q$-ary cyclic Singleton-optimal LRC with length $n$, minimum distance $d=6$ and locality $r=2$.
\end{remark}
\begin{example}
Suppose $q=13$ and $n=3(q+1)/2=21$. Let $\alpha \in \mathbb{F}_{13^2}$ be a primitive 21-th root of unity. Let $g(x) = (x^7-1)(x-\alpha)(x-\alpha^{13})(x-\alpha^{14})=x^{10} + 9x^9 + 3x^8 + 10x^7 + 12x^3 + 4x^2 + 10x + 3  \in \mathbb{F}_{13}[x]$
and $C$ be the cyclic code of length 21 with generator polynomial $g(x)$. With the help of Magma, we can calculate that $d=6$, which coincides with Proposition \ref{thm3}.
\end{example}

\section{Cyclic and Constacyclic Perfect LRCs}
In \cite{FCXF20-2}, we first introduced the definition of perfec LRCs. By using the techniques of finite geometry and finite fields, we proposed two general constructions of perfect LRCs in \cite{FCXF20-2}. In this section, we will give two constructions of perfect LRCs via cyclic and constacyclic codes.
\subsection{Cyclic Perfect LRCs}\label{cyclic-perfect}
 Before give our construction of cyclic perfect LRCs, we need an auxiliary lemma.
\begin{lemma}[\cite{LN97}]\label{Tr}
Let $a \in \mathbb{F}_{q}$  and $p$ be the characteristic of $\mathbb{F}_{q}$. Then the trinomial $x^{p}-x-a$ is irreducible in $\mathbb{F}_{q}[x]$ if and only if $\textnormal{Tr}_{\mathbb{F}_{q}/\mathbb{F}_{p}}(a) \neq 0$.
\end{lemma}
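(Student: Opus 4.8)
The final statement to be proved is Lemma~\ref{Tr}, which states a classical irreducibility criterion: for $a \in \mathbb{F}_q$ with characteristic $p$, the Artin--Schreier-type trinomial $x^p - x - a$ is irreducible over $\mathbb{F}_q$ if and only if $\mathrm{Tr}_{\mathbb{F}_q/\mathbb{F}_p}(a) \neq 0$. Since this is cited from Lidl--Niederreiter's textbook, my plan is to give the standard Artin--Schreier argument rather than anything novel.

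\medskip

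The plan is as follows. First I would establish the basic structural fact: if $\beta \in \overline{\mathbb{F}_q}$ is any root of $f(x) = x^p - x - a$, then $\beta, \beta+1, \beta+2, \dots, \beta+(p-1)$ are all roots, because $(\beta+j)^p - (\beta+j) - a = \beta^p + j^p - \beta - j - a = \beta^p - \beta - a = 0$ using $j^p = j$ in characteristic $p$. Since these are $p$ distinct elements and $\deg f = p$, they constitute all the roots of $f$; in particular $f$ is separable and $\mathbb{F}_q(\beta)$ is its splitting field. Next I would observe that the degree $[\mathbb{F}_q(\beta):\mathbb{F}_q]$ equals the degree of the minimal polynomial of $\beta$, and since all conjugates of $\beta$ over $\mathbb{F}_q$ lie among $\{\beta+j : 0 \le j \le p-1\}$, the Frobenius $\phi: x \mapsto x^q$ sends $\beta \mapsto \beta + c$ for some fixed $c \in \mathbb{F}_p$ (namely $c = \beta^q - \beta$, which one checks lies in $\mathbb{F}_p$). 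Iterating, $\phi^i(\beta) = \beta + ic$, so the orbit of $\beta$ under Frobenius has size equal to the additive order of $c$ in $\mathbb{F}_p$: this is $p$ if $c \neq 0$ and $1$ if $c = 0$. Hence $f$ is irreducible over $\mathbb{F}_q$ if and only if $c \neq 0$.

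\medskip

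The final step is to identify $c$ with the trace. Writing $q = p^s$, compute
\[
c = \beta^q - \beta = \beta^{p^s} - \beta = \sum_{i=0}^{s-1}\left(\beta^{p^{i+1}} - \beta^{p^i}\right) = \sum_{i=0}^{s-1}\left(\beta^{p^i}\right)^p - \beta^{p^i}.
\]
Now for each $i$, applying the relation $\beta^p - \beta = a$ raised to the Frobenius $\phi^i_p: x \mapsto x^{p^i}$ (which fixes $a \in \mathbb{F}_q$... careful: it fixes $\mathbb{F}_p$, and $a$ need not be in $\mathbb{F}_p$) — more precisely $(\beta^{p^i})^p - \beta^{p^i} = (\beta^p - \beta)^{p^i} = a^{p^i}$. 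Therefore
\[
c = \sum_{i=0}^{s-1} a^{p^i} = \mathrm{Tr}_{\mathbb{F}_q/\mathbb{F}_p}(a).
\]
Combining with the previous paragraph, $f$ is irreducible over $\mathbb{F}_q$ iff $c \neq 0$ iff $\mathrm{Tr}_{\mathbb{F}_q/\mathbb{F}_p}(a) \neq 0$, which completes the proof.

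\medskip

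I do not anticipate a serious obstacle here, since this is a well-known result; the only point requiring mild care is the passage $c = \beta^q - \beta = \sum_{i} a^{p^i}$, where one must correctly track that Frobenius powers act on $\beta$ (producing the telescoping sum) and that $\beta^p - \beta = a$ gets raised to successive $p$-th powers to yield $a^{p^i}$, assembling exactly the absolute trace of $a$. One also wants to note at the outset that the case $p \nmid \deg$ issues do not arise and that $a \in \mathbb{F}_q$ (as opposed to a larger field) is what makes $\mathrm{Tr}_{\mathbb{F}_q/\mathbb{F}_p}(a)$ the right quantity. Since the lemma is only invoked later to guarantee irreducibility of a specific trinomial used in the constacyclic perfect LRC construction, a clean citation-style proof along these lines suffices.
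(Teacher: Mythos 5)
Your proof is correct: it is the standard Artin--Schreier argument (the roots of $x^p-x-a$ form a coset $\beta+\mathbb{F}_p$, the Frobenius orbit of $\beta$ has size equal to the additive order of $c=\beta^q-\beta$, and the telescoping computation identifies $c$ with $\mathrm{Tr}_{\mathbb{F}_q/\mathbb{F}_p}(a)$), and each step, including the careful handling of $(\beta^p-\beta)^{p^i}=a^{p^i}$, is sound. The paper itself gives no proof of this lemma --- it is stated with only a citation to Lidl--Niederreiter --- so your write-up simply supplies the expected textbook argument.
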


Suppose $q \geq 4$ is even with $3 \mid (q+1)$, and $m$ is even with $\gcd(m,q-1)=1$.
Let $n=\frac{q^{m}-1}{q-1}$, then $n=q^{m-1}+q^{m-2}+\cdots +1 \equiv 0 (\textnormal{mod } q+1)$. Now, our construction of cyclic perfect LRCs is presented as follows.
\begin{theorem}\label{thm3}
Suppose $q \geq 4$ is even with $3 \mid (q+1)$, and $m$ is even with $\gcd(m,q-1)=1$. Let $\alpha \in \mathbb{F}_{q^{m}}$ be a primitive $n$-th root of unity. Set
\[L=\{1, \alpha^{3}, \alpha^{6}, \cdots, \alpha^{n-3}\}\]
and
\[D=\{\alpha, \alpha^{q}, \cdots, \alpha^{q^{m-1}}\}.\]
Let $g(x)=\prod\limits_{a \in L\bigcup D}(x-a)$ and $C=\langle g(x)\rangle$.
Then $C$ is a $q$-ary cyclic $(n=\frac{q^{m}-1}{q-1}, k=\frac{2}{3}n-m, d=5; r=2)$-perfect LRC.
\end{theorem}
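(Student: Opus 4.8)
The plan is to mirror the structure of the proof of Theorem~\ref{even}, but replacing the ad hoc BCH-bound argument by a counting argument tailored to the Hamming-type bound. First I would establish the easy structural facts: since $3\mid(q+1)$ and $\gcd(m,q-1)=1$ one checks that $n=\frac{q^m-1}{q-1}$ is divisible by $q+1$, hence by $3$, so $L$ is well-defined and $\prod_{a\in L}(x-a)=x^{n/3}-1$. Because $q$ is even and $m$ is even, the set $D=\{\alpha,\alpha^q,\dots,\alpha^{q^{m-1}}\}$ is the $q$-cyclotomic coset of $1$ modulo $n$ (it has exactly $m$ elements since $\gcd(m,q-1)=1$ forces the coset not to collapse), so $\prod_{a\in D}(x-a)=M(x)$ is the minimal polynomial of $\alpha$ over $\mathbb{F}_q$, of degree $m$; one then argues $L\cap D=\emptyset$ using $3\mid(q+1)$ together with the fact that every element of $D$ has the form $\alpha^{q^j}$ with $q^j\not\equiv0\pmod 3$. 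Thus $g(x)=(x^{n/3}-1)M(x)\in\mathbb{F}_q[x]$, $\deg g=\frac n3+m$, and $k=n-\deg g=\frac23 n-m$. Locality $r=2$ with disjoint repair groups follows from the set $L$ via Proposition~\ref{locality} (the repair groups being the cosets of the subgroup of index $3$ in $\langle\alpha\rangle$, which partition $[n]$).

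Next I would verify the distance $d=5$. The lower bound $d\ge5$ is not a consequence of a consecutive-root (BCH) pattern here, so instead I would use the Hamming-type/sphere-packing inequality \eqref{4} in the reverse direction: since $C$ has $q^k$ codewords with $k=\frac23 n-m$ and disjoint local repair groups, plugging into \eqref{4} and using $q^{rn/(r+1)}=q^{2n/3}$, $r=2$ gives $q^{2n/3-m}\le \frac{q^{2n/3}}{n(q-1)+1}$, i.e. $n(q-1)+1\le q^m$; the chosen parameters make this an equality ($n(q-1)+1=q^m-1-q+1+1=q^m$ when $n=\frac{q^m-1}{q-1}$... more precisely $n(q-1)=q^m-1$ so $n(q-1)+1=q^m$ exactly). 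So the inequality \eqref{4} with $d=5$ or $6$ would be met with equality \emph{if} $d\ge5$; the real content is to prove $d\ge5$ and $d\le5$. For $d\le5$ I would exhibit an explicit low-weight codeword, or more cleanly argue $d\le5$ from a dimension/dual count: the dual code $C^\perp=\langle\overline h(x)\rangle$ with $h(x)=(x^n-\lambda)/g(x)$ contains the weight-$3$ parity checks coming from $L$ spanning an $(n/3)$-dimensional subspace $\mathcal V$, and the residual redundancy is only $m$, so a short exact counting (the number of syndromes versus the number of low-weight error patterns) forces a codeword of weight $\le5$, with equality in the Hamming-type bound precisely ruling out weight $\le4$ while not ruling out weight $5$. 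Concretely: $d\ge5$ holds iff every nonzero codeword of $C$ has weight $\ge5$, equivalently no nonzero $\mathbb{F}_q$-combination of $\le4$ columns of the parity-check matrix vanishes while respecting the locality structure — this is exactly the perfectness/packing condition, so I expect the cleanest route is to show directly that $d\ge5$ by a column-independence argument on the parity-check matrix built from $x^{n/3}-1$ and $M(x)$.

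For the core argument ($d\ge5$) I would imitate Case~1 and Case~2 of Theorem~\ref{even}. Suppose for contradiction a codeword $c(x)=\sum_{\ell=0}^{s}c_\ell x^{k_\ell}$ of weight $s+1\le 4$ with $c_\ell\in\mathbb{F}_q^*$, $0=k_0<k_1<\dots<k_s\le n-1$; since $g(x)\mid c(x)$ and $x^{n/3}-1\mid g(x)$, summing $c(\alpha^{3i})$ over $i$ yields $\sum_{\ell\in I}c_\ell=0$ where $I=\{\ell:\ \tfrac n3\mid k_\ell\}$, forcing $|I|\ge2$ and hence $s+1\ge\ldots$; then I split on $|I|$. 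In each subcase I use that $M(\alpha)=0$ together with the Frobenius relations $c(\alpha^{q^j})=c(\alpha)^{q^j}=0$ to get extra linear constraints, and derive a divisibility condition of the shape $n\mid k_\ell(q-1)$; since $\gcd(\tfrac n3,q-1)=1$ (here I use $\gcd(m,q-1)=1$, because $\gcd(\frac{q^m-1}{q-1},q-1)$ divides $m$), this forces $\tfrac n3\mid k_\ell$, i.e. $\ell\in I$, contradicting the case hypothesis. The Vandermonde/determinant trick from Case~2 of Theorem~\ref{even} handles the situation where several exponents collapse modulo $\tfrac n3$.

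The main obstacle, as in the Singleton-optimal case, will be the distance analysis — specifically pinning down $d\ge5$ (the upper bound $d\le5$ being immediate once perfectness is invoked, and $d\ge5$ being the statement with teeth). The delicate points are: (i) showing $|D|=m$ and $L\cap D=\emptyset$ so that $\deg g$ is exactly $\tfrac n3+m$ (here the evenness of $q$ and of $m$ and the congruence $3\mid(q+1)$ all enter); and (ii) in the weight-$\le4$ contradiction, controlling the case where two or three of the nonzero exponents lie in the same coset modulo $\tfrac n3$, which requires a small Vandermonde argument over the subfield $\mathbb{F}_q\subseteq\mathbb{F}_{q^m}$ rather than over $\mathbb{F}_{q^2}$ as in Theorem~\ref{even}; the arithmetic is the same in spirit but one must be careful that $\alpha^{n/3}$ generates $\mathbb{F}_q^*$ (which follows from $\operatorname{ord}(\alpha^{n/3})=q-1$) so that ``$\alpha^{k}\in\mathbb{F}_q$'' is equivalent to ``$\tfrac n3\mid k(q-1)$''. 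Lemma~\ref{Tr} is presumably invoked only to guarantee, for a companion remark or a related constacyclic construction, that a certain trinomial is irreducible, so I would set it aside for the cyclic case itself.
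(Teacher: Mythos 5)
Your structural analysis ($|D|=m$, $L\cap D=\emptyset$, $\deg g=\tfrac n3+m$, hence $k=\tfrac23 n-m$, locality via Proposition~\ref{locality}) and your plan for $d\ge 5$ are sound and essentially match the paper: the paper first gets $d\ge 4$ from the Hartmann--Tzeng bound (roots $\alpha^0,\alpha^1,\alpha^{q},\alpha^{q+1}$ with $b_1=1$, $b_2=q$, so your remark that no consecutive-root pattern is available is not quite right) and then rules out weight $4$ by exactly your coset-sum argument, ending with $\tfrac n3\mid (q-1)k_2$ and $\gcd(n,q-1)=\gcd(m,q-1)=1$.

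The gap is the other direction, $d\le 5$. You call this ``immediate once perfectness is invoked,'' but perfectness is the conclusion, not a hypothesis, and the Hamming-type bound \eqref{4} is identical for $d=5$ and $d=6$ (both have $\lfloor (d-1)/2\rfloor =2$), so equality of the parameters in \eqref{4} does not by itself exclude $d=6$. Your fallback ``syndrome counting'' is the right germ of an argument --- once $d\ge 5$ is known, $q^k\cdot B_{\mathcal V}(2)=q^{2n/3}=|\mathcal V|$ shows the radius-$2$ balls centered at codewords tile $\mathcal V$, and adding a weight-$3$ vector of $\mathcal V$ to a codeword then forces two codewords at distance $\le 5$ --- but you never verify the needed ingredients (that $C\subseteq\mathcal V$, that $\mathcal V$ contains weight-$3$ vectors, and the exact ball count), and you identify $\mathcal V$ as the $n/3$-dimensional span of the local checks rather than its $2n/3$-dimensional dual, which muddles the count. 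The paper instead exhibits an explicit weight-$5$ codeword $c_0+ax^{n/3}+(c_0+a)x^{2n/3}+x^{n/(q+1)}+x^{n/(q+1)+2n/3}$, and this is precisely where Lemma~\ref{Tr} is used: one needs to solve $c_0^{2}+ac_0+a^{2}+b=0$ over $\mathbb F_q$, which is done by a trace computation exploiting that $q=2^{\ell}$ with $\ell$ odd (this is where the evenness of $q$ and $3\mid(q+1)$ really enter). Your decision to ``set aside'' Lemma~\ref{Tr} as irrelevant to the cyclic case discards the engine of the hardest step. Either route (explicit codeword or a carefully executed tiling argument) can be completed, but as written your proposal does not contain a proof that $d\le 5$.
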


\begin{proof}
The elements of $L$ and $D$ are roots of $x^n-1$, and
\[\prod_{a \in L}(x-a)=\prod_{i=0}^{\frac{n}{3}-1}(x-\alpha^{3i})=x^\frac{n}{3}-1.\]
Then
\[g(x)=\prod\limits_{a \in L\bigcup D}(x-a)=(x^{\frac{n}{3}}-1)(x-\alpha)(x-\alpha^q)\cdots(x-\alpha^{q^{m-1}}).\]
Thus $g(x) \in \mathbb{F}_q[x]$ and $g(x) \mid (x^n-1)$. So $C$ is a cyclic code over $\mathbb{F}_q$ of length $n$ and the dimension of $C$ is $k=n-\deg(g(x))=\frac{2}{3}n-m$.

Note that $\alpha^{0}, \alpha^{1}, \alpha^{q}, \alpha^{q+1}$ are roots of $g(x)$ and $\gcd(q,n)=1$. By the Hartmann-Tzeng bound (let $\lambda=1, \delta=3, \gamma=1, u=0, b_1=1, b_2=q$ in Lemma \ref{HT}), we have $d \geq 3+1=4$. Next we prove that $d \neq 4$. Otherwise, suppose there exists a codeword 
\[c(x)=c_0+c_1x^{k_1}+c_2x^{k_2}+c_3x^{k_3}\in C\]
with $c_0, c_1, c_2, c_3 \in \mathbb{F}_{q}^{*}$ and $0=k_0< k_1, k_2,  k_3 \leq n-1$, where $k_1, k_2,  k_3$ are pairwise distinct positive integers. Then $g(x) \mid c(x)$.  Let
\[I=\{\ell: 0 \leq \ell \leq 3 \textnormal{ and }  \frac{n}{3}\mid  k_{\ell} \}.\] Then $0 \in I$. Since $k_1, k_2,  k_3\leq n-1$, $|I| \leq 3$.
Note that
\begin{equation*}\label{sum}
            \sum_{i=0}^{\frac{n}{3}-1}(\alpha^{3i})^{k_{\ell}}=\left\{
            \begin{aligned}
                & 1, ~\textnormal{ for } \ell \in I, \\
                & 0 ,~ \textnormal{ for } \ell \notin I.
            \end{aligned}\right.
        \end{equation*}
From $c(\alpha^{3i})=0$, $i=0, 1, \cdots, \frac{n}{3}-1$, we have 
\begin{eqnarray*}
  0 &=& \sum_{i=0}^{\frac{n}{3}-1}c(\alpha^{3i})=\sum_{i=0}^{\frac{n}{3}-1}\sum_{\ell=0}^{3}c_{\ell}(\alpha^{3i})^{k_{\ell}} \\
    &=& \sum_{\ell=0}^{3}c_{\ell} \sum_{i=0}^{\frac{n}{3}-1}(\alpha^{3i})^{k_{\ell}}=\sum_{\ell \in I}c_{\ell},
\end{eqnarray*} 
which implies that $|I| \geq 2$.

\textbf{Case 1}: $|I|=3$. Without loss of generality, we suppose
\[I=\{0,1,2\}, k_1=\frac{n}{3}\textnormal{ and } k_2=\frac{2n}{3}.\]
Thus $c_0+c_1+c_2=0$. By $c(1)=0$, we have $c_0+c_1+c_2+c_3=0 \Rightarrow c_3=0$, which is a contradiction.

\textbf{Case 2}: $|I|=2$. Without loss of generality, we suppose
\[I=\{0,1\}, \frac{n}{3} \mid k_1\textnormal{ and }\frac{n}{3} \nmid k_2, k_3.\]
Thus $c_0+c_1=0$. Similarly, from $c(1)=0$, we have $c_2+c_3=0$. From $c(\alpha^{3})=0$, we have \[c_{0}+c_{1}+c_{2}\alpha^{3k_2}+c_{3}\alpha^{3k_3}=0 \]
\[\Rightarrow \alpha^{3k_2}=\alpha^{3k_3} \Rightarrow \frac{n}{3} \mid (k_3-k_2).\]
From $c(\alpha)=0$, we have $c_{0}+c_{1}\alpha^{k_1}+c_{2}\alpha^{k_2}+c_{3}\alpha^{k_3}=0 \Rightarrow c_{0}(1+\alpha^{k_1})+c_{2}\alpha^{k_2}(1+\alpha^{k_3-k_2})=0 \Rightarrow \frac{\alpha^{k_2}(1+\alpha^{k_3-k_2})}{(1+\alpha^{k_1})}=\frac{c_0}{c_2} \in \mathbb{F}_{q} \Rightarrow$
\[\Big(\frac{\alpha^{k_2}(1+\alpha^{k_3-k_2})}{(1+\alpha^{k_1})}\Big)^{q}=\frac{\alpha^{k_2}(1+\alpha^{k_3-k_2})}{(1+\alpha^{k_1})}.\]
Since $3 \mid (q+1)$ and $\frac{n}{3} \mid k_1, (k_3-k_2)$, we have $\alpha^{(q+1)k_1}=\alpha^{(q+1)(k_3-k_2)}=1.$ Thus \[\Big(\frac{\alpha^{k_2}(1+\alpha^{k_3-k_2})}{(1+\alpha^{k_1})}\Big)^{q}=\frac{\alpha^{qk_2}(1+\alpha^{-(k_3-k_2)})}{(1+\alpha^{-k_1})}=\frac{\alpha^{qk_2+k_1}(1+\alpha^{k_3-k_2})}{\alpha^{k_3-k_2}(1+\alpha^{k_1})}.\]
Then $\alpha^{k_2}=\frac{\alpha^{qk_2+k_1}}{\alpha^{k_3-k_2}} \Rightarrow n \mid \big((q-1)k_2+k_1+k_3-k_2\big) \Rightarrow \frac{n}{3} \mid (q-1)k_2$.
Note that $\gcd(n,q-1)=\gcd(m,q-1)=1$, thus $\frac{n}{3} \mid k_2$, which is a contradiction. Therefore $d \geq 5$.

According to the set $L$, $C$ has locality $r=2$ by Proposition \ref{locality}. Note that $\frac{q^{\frac{r}{r+1}n}}{\frac{rn}{2}(q-1)+1}=\frac{q^\frac{2}{3}n}{q^m}=q^k$, which achieves the bound \eqref{4} with equality. Thus we only need to show that $d(C)=5,$ i.e, find a codeword of $C$ of Hamming weight 5. 

Denote $\omega=\alpha^{\frac{n}{3}}$ and $\beta=\alpha^{\frac{n}{q+1}}$. Then $\omega^{3}=1, \beta^{q+1}=1 \Rightarrow \omega^{2}=\omega+1$ and $\beta \in \mathbb{F}_{q^2} \backslash \mathbb{F}_{q}$. Thus there exist $a, b\in \mathbb{F}_{q}^{*}$ such that
\begin{equation}\label{beta}
  \beta^{2}+a\beta+b=0,
\end{equation} i.e.,
$(\frac{\beta}{a})^{2}+\frac{\beta}{a}+\frac{b}{a^{2}}=0$. Thus $x^2+x+\frac{b}{a^{2}}$ is irreducible. By Lemma \ref{Tr}, $\textnormal{Tr}_{\mathbb{F}_{q}/\mathbb{F}_{2}}(\frac{b}{a^{2}})=1$. Since $3 \mid (q+1)$, we can derive that $q=2^{\ell}$ for some odd $\ell$. Thus \[\textnormal{Tr}_{\mathbb{F}_{q}/\mathbb{F}_{2}}(1+\frac{b}{a^{2}})=\textnormal{Tr}_{\mathbb{F}_{q}/\mathbb{F}_{2}}(1)+\textnormal{Tr}_{\mathbb{F}_{q}/\mathbb{F}_{2}}(\frac{b}{a^{2}})=1+1=0.\]
By Lemma \ref{Tr} again, there exists $c_0 \in \mathbb{F}_{q}^{*}$ such that $(\frac{c_0}{a})^{2}+\frac{c_0}{a}+1+\frac{b}{a^{2}}=0$, i.e.,
\begin{equation}\label{ab}
  c^{2}_{0}+ac_0+a^2+b=0.
\end{equation}
 Note that 
 \begin{eqnarray*}
 (\beta+c_0)^{3} &=& \beta^{3}+\beta^{2}c_0+\beta c^{2}_{0}+c^{3}_{0}\\
  &=& \beta(a\beta+b)+(a\beta+b)c_0+\beta c^{2}_{0}+c^{3}_{0} \\
  &=& a(a\beta+b)+b\beta+(a\beta+b)c_0+\beta c^{2}_{0}+c^{3}_{0} \\
  &=& (c^{2}_{0}+ac_0+a^2+b)\beta+ab+bc_0+c^{3}_{0} \\ 
  &=& (a+c_0)b+c^{3}_{0}.
 \end{eqnarray*}
 Thus
 \[(\beta+c_0)^{3}+a^{3}=(a+c_0)(b+c^{2}_{0}+ac_0+a^2)=0,\]
 i.e., $(\frac{a}{\beta+c_0})^{3}=1 \Rightarrow \frac{a}{\beta+c_0}=\omega$ or $\omega^{2}$. Without loss of generality, we suppose $\frac{a}{\beta+c_0}=\omega$ (the proof of the case of $\omega^{2}$ is similar). We prove that $c_0 \neq a$. Otherwise, $c_0=a$, then from Eq. \eqref{ab}, we have $b=a^2$. Then $\beta^{2}+a\beta+a^{2}=0$ by Eq. \eqref{beta}. One can deduce that 
 \[(\frac{\beta}{a})^{3}=1 \Rightarrow \beta^{3(q-1)}=1 \Rightarrow (q+1) \mid 3(q-1) \Rightarrow (q+1) \mid 3,\]
 which contradicts with $q \geq 4$. Now let \[c(x)=c_0+ax^{\frac{n}{3}}+(c_0+a)x^{\frac{2n}{3}}+x^{\frac{n}{q+1}}+x^{\frac{n}{q+1}+\frac{2n}{3}}\]
 which is a polynomial over $\mathbb{F}_{q}$ of weight 5. Note that for $i=0,1,\dots, \frac{n}{3}-1$,
 \[c(\alpha^{3i})=c_0+a+(c_0+a)+\alpha^{\frac{3i n}{q+1}}+\alpha^{\frac{3i n}{q+1}+2i n}=0.\]  
 And 
 \begin{eqnarray*}
 c(\alpha) &=& c_0+a\omega+(c_0+a)\omega^2+\beta+\beta \omega^{2} \\
  &=& c_0+a\omega+(c_0+a)(1+\omega)+\beta+\beta (1+\omega) \\
  &=& a+c_0 \omega+\beta \omega=0.
 \end{eqnarray*}
 Thus $g(x) \mid c(x)$, i.e, $c(x)$ is a codeword of $C$ of weight 5. Thus $d=5$. 
 
 The Theorem is proved.
\end{proof}

\begin{example}
Suppose $q=8$, $m=4$, $n=\frac{q^m-1}{q-1}=585$ and $\alpha \in \mathbb{F}_{8^4}$ is a primitive element. Let  $g(x)=(x^{195}-1)(x-\alpha)(x-\alpha^8)(x-\alpha^{64})(x-\alpha^{256}) \in \mathbb{F}_8[x]$ and $C=\langle g(x) \rangle$. By the Magma software, we can calculate that $d(C)=5$, which concides with Theorem \ref{thm3}.  
\end{example}

\subsection{Constacyclic Perfect LRCs}
For $q$ is even, we have constructed a family of $q$-ary cyclic perfect LRCs in Subsection \ref{cyclic-perfect}. For general $q$, we consider the construction of $q$-ary perfect LRCs via constacyclic codes. Suppose $3 \mid (q+1)$ and $m$ is even. Let $n=\frac{q^m-1}{q-1}.$ Then it can be shown that $3 \mid n$.  Let $\pi$ be a primitive element of $\mathbb{F}_{q^m}$,
\[\theta=\pi^{-q+2} \textnormal{ and } \lambda=\theta^n.\]
Then $\lambda^{q-1}=\theta^{q^m-1}=1$, hence $\lambda \in \mathbb{F}^{*}_q$. Let 
\[\alpha=\pi^{q-1},\]
then $\alpha$ is a primitive $n$-th root of unity and
\[x^n-\lambda=x^n-\theta^n=\prod_{i=0}^{n-1}(x-\theta\alpha^i)\]

\begin{theorem}\label{thm4}
Suppose $q>2$, $3 \mid (q+1)$ and $m$ is even. Let $n=\frac{q^m-1}{q-1}.$
Denote
\[L=\{\theta, \theta\alpha^{3}, \theta\alpha^{6}, \cdots, \theta\alpha^{n-3}\},\]
\[D=\{\theta\alpha, (\theta\alpha)^{q}, \cdots, (\theta\alpha)^{q^{m-1}}\}.\]
Let $g(x)=\prod\limits_{a \in L\bigcup D}(x-a)$ and $C=\langle g(x)\rangle$.
Then $C$ is a $\lambda$-constacyclic $(n=\frac{q^{m}-1}{q-1}, k=\frac{2}{3}n-m, d=5; r=2)$-perfect LRC.
\end{theorem}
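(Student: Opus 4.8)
The plan is to mirror the proof of Theorem~\ref{thm3} almost line for line, replacing cyclic evaluations at powers of $\alpha$ by constacyclic evaluations at $\theta\alpha^i$, and invoking the constacyclic analogues (Lemmas~\ref{BCH}, \ref{HT}, Proposition~\ref{locality}) that the excerpt has already set up for exactly this purpose. First I would record the structural facts: since $3\mid n$, the product $\prod_{a\in L}(x-a)=\prod_{i=0}^{n/3-1}(x-\theta\alpha^{3i})=x^{n/3}-(\theta^{n/3})$, and because $\alpha=\pi^{q-1}$ has order $n$ while $\theta\alpha$ is a conjugate-closed orbit under $q$-th powering (the set $D$), we get $g(x)\in\mathbb{F}_q[x]$, $g(x)\mid(x^n-\lambda)$, so $C$ is a genuine $\lambda$-constacyclic code of dimension $k=n-\deg g=\tfrac23 n-m$. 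Locality $r=2$ follows from Proposition~\ref{locality} applied to the set $L$ (with $j$ chosen so $\theta\alpha^j=\theta$, i.e. $j=0$, and $r+1=3\mid n$). The Hamming-type bound \eqref{4} with $r=2$, $d=5$ reads $q^k\le q^{2n/3}/(n(q-1)+1)=q^{2n/3}/q^m$ — here one uses $n(q-1)+1=q^m$, which is just the identity $n=\frac{q^m-1}{q-1}$ — so equality in \eqref{4} is automatic once $d=5$ is established, and the whole theorem reduces to proving $d(C)=5$.

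Next I would prove $d\ge 5$. The four elements $\theta\alpha^0,\theta\alpha^1,(\theta\alpha)^q=\theta^q\alpha^q$, and $\theta^{q+1}\alpha^{q+1}$ — wait, here care is needed: $D$ consists of $(\theta\alpha)^{q^j}$, not $\theta\alpha^{q^j}$, so one must check that the consecutive-root pattern needed for Lemma~\ref{HT} is present. In fact $\theta^q=\theta\cdot\theta^{q-1}$ and $\theta^{q-1}=\pi^{(-q+2)(q-1)}=\alpha^{-q+2}$ (since $\alpha=\pi^{q-1}$), so $(\theta\alpha)^q=\theta^q\alpha^q=\theta\alpha^{-q+2}\alpha^q=\theta\alpha^{2}$, and inductively $(\theta\alpha)^{q^j}=\theta\alpha^{?}$ lies in the $\theta\alpha^\bullet$ orbit; I would verify that $\{1,q\}$-type arithmetic progressions of the form $\theta\alpha^{u+i_1 b_1+i_2 b_2}$ appear, then apply Lemma~\ref{HT} with $\delta=3,\gamma=1$ to get $d\ge 4$. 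To rule out $d=4$: suppose $c(x)=c_0+c_1x^{k_1}+c_2x^{k_2}+c_3x^{k_3}$ is a weight-$4$ codeword; summing $c(\theta\alpha^{3i})=0$ over $i=0,\dots,n/3-1$ and using $\sum_i(\alpha^{3i})^{k_\ell}=n/3\cdot\mathbf{1}[\,n/3\mid k_\ell\,]$ — note the $\theta^{k_\ell}$ factors pull out and are nonzero — gives $\sum_{\ell\in I}c_\ell\theta^{k_\ell}=0$ where $I=\{\ell:\tfrac n3\mid k_\ell\}\ni 0$, forcing $|I|\ge 2$. The case $|I|=3$ collapses via $c(\theta)=0$ to $c_3=0$, a contradiction. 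The case $|I|=2$ runs exactly as in Theorem~\ref{thm3}: $c_0\theta^{k_0}+c_1\theta^{k_1}=0$, $c_2\theta^{k_2}+c_3\theta^{k_3}=0$, $c(\theta\alpha^3)=0$ yields $\tfrac n3\mid(k_3-k_2)$, and $c(\theta\alpha)=0$ combined with raising to the $q$-th power and using $3\mid(q+1)$ (so $\alpha^{(q+1)k_1}=\alpha^{(q+1)(k_3-k_2)}=1$) and $\gcd(n,q-1)=\gcd(m,q-1)$ — one needs $\gcd(m,q-1)=1$ here, which may require adding that hypothesis or arguing it differently for general $q$ — forces $\tfrac n3\mid k_2$, contradiction. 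Hence $d\ge 5$.

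Finally I would exhibit an explicit weight-$5$ codeword, which is the real content. Set $\omega=\alpha^{n/3}$ (so $\omega^3=1$, $\omega\ne 1$, hence $\omega^2+\omega+1=0$) and $\beta=(\theta\alpha)^{n/(q+1)}$ or more precisely an element of $\mathbb{F}_{q^2}\setminus\mathbb{F}_q$ coming from $\alpha^{n/(q+1)}$ scaled by the appropriate power of $\theta$; then $\beta$ satisfies a quadratic $\beta^2+a\beta+b=0$ over $\mathbb{F}_q$ with $a,b\in\mathbb{F}_q^*$, and one solves $(\beta+c_0)^3=\text{(something)}$ to locate $c_0\in\mathbb{F}_q^*$ with $\tfrac{a}{\beta+c_0}\in\{\omega,\omega^2\}$ and $c_0\ne a$. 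For \emph{even} $q$ this is the trace computation via Lemma~\ref{Tr} already done in Theorem~\ref{thm3}; for \emph{odd} $q$ one instead uses the quadratic formula directly — the discriminant condition is what $3\mid(q+1)$ buys us, and this is the step I expect to be the main obstacle, since the finite-field gymnastics that produced $c_0$ in the even case used characteristic $2$ in an essential way and must be redone. Having found $c_0$, I would propose the codeword
\[c(x)=c_0+ax^{\frac n3}+(c_0+a)x^{\frac{2n}{3}}+\mu x^{\frac{n}{q+1}}+\mu x^{\frac{n}{q+1}+\frac{2n}{3}}\]
for a suitable nonzero constant $\mu$ (absorbing powers of $\theta$), check $c(\theta\alpha^{3i})=0$ for all $i$ by the same telescoping as before, and check $c(\theta\alpha)=0$ using $\omega^2=\omega+1$ (even $q$) or the explicit quadratic relation (odd $q$). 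This forces $g(x)\mid c(x)$, so $c(x)\in C$ has weight $5$, giving $d=5$ and, with the bound computation above, that $C$ is a perfect LRC. The bookkeeping with $\theta$ throughout is routine but must be tracked carefully so that every evaluated quantity stays in $\mathbb{F}_q$ where the argument demands it.
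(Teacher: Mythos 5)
Your proposal correctly handles the routine parts (that $g(x)\in\mathbb{F}_q[x]$, the dimension count, locality via Proposition~\ref{locality}, and the identity $\frac{rn}{2}(q-1)+1=q^m$ that makes the Hamming-type bound tight), but both hard steps are gapped. For the lower bound $d\ge 5$, you compute $(\theta\alpha)^q=\theta\alpha^2$ but then fall back on Hartmann--Tzeng with $\delta=3,\gamma=1$ plus the weight-$4$ case analysis from Theorem~\ref{thm3}. That case analysis does not close: in your Case $|I|=2$ you arrive at $\frac{n}{3}\mid(q-1)k_2$ and need $\gcd(n,q-1)=\gcd(m,q-1)=1$ to conclude, a hypothesis Theorem~\ref{thm4} deliberately does not impose (you flag this yourself but do not resolve it). The observation $(\theta\alpha)^q=\theta\alpha^2$ is precisely what makes the constacyclic construction work: together with $\theta,\theta\alpha^3\in L$ and $\theta\alpha\in D$, the generator polynomial has the four \emph{consecutive} roots $\theta,\theta\alpha,\theta\alpha^2,\theta\alpha^3$, so the BCH bound (Lemma~\ref{BCH} with $\delta=5$) gives $d\ge 5$ in one line, with no case analysis and no gcd condition. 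The paper's choice $\theta=\pi^{-q+2}$ exists exactly to force the Frobenius orbit of $\theta\alpha$ into consecutive positions; you noticed the computation but not its consequence.

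The second gap is the weight-$5$ codeword, which you correctly identify as the main obstacle but do not construct. Your plan — port the quadratic-plus-trace argument of Theorem~\ref{thm3} to odd characteristic and use a codeword supported on $\{0,\frac n3,\frac{2n}3,\frac{n}{q+1},\frac{n}{q+1}+\frac{2n}3\}$ — is not what is needed, and the characteristic-$2$ identities (e.g.\ $\textnormal{Tr}(1)=1$, $\omega^2=\omega+1$, sign-free binomial expansions) do not survive the transfer. The paper instead gives a characteristic-independent argument: choose $a\in\mathbb{F}_q^*$ with $a\ne -1,2$ and $2a-1\ne 0$, set $\omega=\alpha^{n/3}$, define $k_3$ by $\pi^{\bar k_3}=\frac{a+\omega}{1-\omega}$, prove $\frac n3\nmid\bar k_3$ (this is where the constraints on $a$ enter, via $(a\omega+1)^3+(a+\omega)^3=\omega(\omega+1)(a+1)(a-2)(2a-1)$), put $k_4=k_3+\frac n3\pmod n$, and take $c(x)=1+a\theta^{-n/3}x^{n/3}-(1+a)\theta^{-2n/3}x^{2n/3}+bx^{k_3}-b\theta^{k_3-k_4}x^{k_4}$. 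The support is pinned down by a discrete logarithm, not by $\frac{n}{q+1}$, and the last two coefficients differ; your template would not evaluate to zero at $\theta\alpha$. As written, your proposal establishes neither $d\ge5$ nor $d\le5$ under the stated hypotheses.
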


\begin{proof}
From the above, we know that the elements of $L$ and $D$ are roots of $x^n-\lambda$, and
\[\prod_{a \in L}(x-a)=\prod_{i=0}^{\frac{n}{3}-1}(x-\theta\alpha^{3i})=x^\frac{n}{3}-\theta^{\frac{n}{3}}.\]
Thus 
\[g(x)=\prod\limits_{a \in L\bigcup D}(x-a)=(x^{\frac{n}{3}}-\theta^{\frac{n}{3}})(x-\theta\alpha)(x-(\theta\alpha)^q)\cdots(x-(\theta\alpha)^{q^{m-1}}).\]
Since $3 \mid (q+1)$, thus $3 \mid (-q+2)$ and $(\theta^{\frac{n}{3}})^{q-1}=(\pi^{\frac{-q+2}{3}})^{q^m-1}=1$, hence $\theta^{\frac{n}{3}} \in \mathbb{F}^{*}_q$. Note that 
\[\theta \alpha=\pi^{-q+2}\pi^{q-1}=\pi.\]
Then $g(x) \in \mathbb{F}_q[x]$ and $g(x) \mid (x^n-\lambda)$ Hence $\dim(C)=\frac{2}{3}n-\deg(g(x))=\frac{2}{3}n-m.$ 
Note that 
\[(\theta\alpha)^q=\pi^q=\pi^{-q+2}\pi^{2(q-1)}=\theta\alpha^{2}.\]
Thus $\theta, \theta \alpha, \theta \alpha^2, \theta \alpha^3$ are roots of $g(x)$. By the BCH bound for constacyclic codes (see Lemma \ref{BCH}), the minimum distance $d(C) \geq 5$.  According to the set $L$, $C$ has locality $r=2$ by Proposition \ref{locality}. Note that $\frac{q^{\frac{r}{r+1}n}}{\frac{rn}{2}(q-1)+1}=\frac{q^\frac{2}{3}n}{q^m}=q^k$, which achieves the bound \eqref{4} with equality. Thus we only need to show that $d(C)=5,$ i.e, find a codeword of $C$ of Hamming weight 5. 

Note that $q>2$ and $3 \mid (q+1)$, thus $q \geq 5$. We can choose $a \in \mathbb{F}^*_q$, such that $a \neq -1, 2$ and $2a-1 \neq 0$ (for example,  
$a=1$ if $q$ is odd; $a \neq 0, -1$, if $q$ is even). Let $\omega=\alpha^{\frac{n}{3}}$, then $\omega^3=1$. Choose $\bar{k}_3 \in \mathbb{Z}$, such that 
\[\pi^{\bar{k}_3}=\frac{a+\omega}{1-\omega}.\]
Let $\bar{k}_3=sn+k_3,$ where $0 \leq k_3 <n$.
We next prove that $\frac{n}{3} \nmid \bar{k}_3$. Otherwise, we have $(\pi^{\bar{k}_3})^{3(q-1)}=1$. Then 
\[\left(\frac{a+\omega}{1-\omega}\right)^{3q}=\left(\frac{a+\omega}{1-\omega}\right)^3\]
\[\Rightarrow \left(\frac{a+\omega^q}{1-\omega^q}\right)^3=\left(\frac{a+\omega}{1-\omega}\right)^3.\]
Since $3 \mid (q+1),$ $\omega^q=\omega^{-1}$. Thus 
\[\left(\frac{a\omega+1}{\omega-1}\right)^3=\left(\frac{a+\omega}{1-\omega}\right)^3\]
\[\Rightarrow (a\omega+1)^3+(a+\omega)^3=0,\]
\[\Rightarrow \omega(\omega+1)(a+1)(a-2)(2a-1)=0,\]
\[\Rightarrow a=-1\textnormal{ or } 2, \textnormal{ or } 2a-1=0,\]
which is a contradiction. Thus $\frac{n}{3} \nmid \bar{k}_3$, hence $k_3 \neq 0, \frac{n}{3}, \frac{2}{3}n$.
Let \[k_4=k_3+\frac{n}{3} (\bmod~n) \textnormal{ with } 0 \leq k_4 <n. \]
Then $k_4 \neq 0, \frac{n}{3}, \frac{2}{3}n$ and $\alpha^{k_4-k_3}=\alpha^{\frac{n}{3}}=\omega \Rightarrow \alpha^{3k_4}=\alpha^{3k_3}$. 
Let $b=\pi^{sn} \in \mathbb{F}^*_q$, then $\pi^{k_3}=\frac{a+\omega}{b(1-\omega)}$. Now, let
\[c(x)=1+a\theta^{-\frac{n}{3}}x^{\frac{n}{3}}-(1+a)\theta^{-\frac{2}{3}n}x^{\frac{2}{3}n}+bx^{k_3}-b\theta^{k_3-k_4}x^{k_4} \in \mathbb{F}_q[x].\]
Then $wt(c(x))=5$, for $i=0,1, \cdots,\frac{n}{3}-1$, we have
\[c(\theta\alpha^{3i})=1+a-(1+a)+b\theta^{k_3}\alpha^{3ik_3}-b\theta^{k_3-k_4}\theta^{k_4}\alpha^{3ik_4}=0,\]
and 
\begin{eqnarray*}
c(\theta\alpha)&=& 1+a\alpha^{\frac{n}{3}}-(1+a)\alpha^{\frac{2}{3}n}+b\pi^{k_3}-b\pi^{k_3}\alpha^{k_4-k_3}\\
&=& 1+a\omega-(1+a)\omega^2+b\pi^{k_3}(1-\omega) \\
&=&-(a+\omega)+b\pi^{k_3}(1-\omega)=0.
\end{eqnarray*}
Hence $g(x) \mid c(x)$, i.e., $c(x)$ is a codeword of $C$ of Hamming weight 5.
\end{proof}

\begin{example}
Suppose $q=5, m=6, n=\frac{q^m-1}{q-1}=3906$ and $\pi$ is a primitive element of $\mathbb{F}_{5^6}$. Let $\theta=\pi^{-3},\lambda=\theta^n=3 \in \mathbb{F}_5$ and $\alpha=\pi^4$. Let $g(x)=(x^{1302}-\theta \alpha)(x-(\theta\alpha))(x-(\theta\alpha)^5)(x-(\theta\alpha)^{25})(x-(\theta\alpha)^{125})(x-(\theta\alpha)^{625})=x^{1308} + x^{1306} + 4x^{1305} + x^{1304} + 2x^{1302} + 2x^6 + 2x^4 + 3x^3 + 2x^2 +4 \in \mathbb{F}_5[x]$ and $C=\langle g(x) \rangle$. With the help of Magma, we can calculate that $d(C)=5$, which coincide with Theorem \ref{thm4}.
\end{example}

\section{Conclusion}
In this paper, we consider new constructions of Singleton-optimal LRCs and perfect LRCs via cyclic and constacyclic codes. Firstly, we obtain two classes of $q$-ary cyclic $(n, k, d=6; r=2)$-Singleton-optimal LRCs with $n=3(q+1)$ and $n=\frac{3(q+1)}{2}$, respectively. To the best of our knowledge, this is the first construction of $q$-ary Singleton-optimal LRCs of length $n>q+1$ and minimum distance $d \geq 5$ via cyclic codes. Secondly, by using cyclic and constacyclic codes, we give two constructions of perfect LRCs.  Cyclic and constacyclic codes have nice algebraic structures and efficient  encoding and decoding algorithm. Thus it is interesting to find more new constructions of Singleton-optimal LRCs and perfect LRCs with longer code length via cyclic and constacyclic codes, and investigate the theoretical bounds for cyclic and constacyclic LRCs in the future.

\section*{Ackonwledgements}
This research is supported in part by National Key Research and Development Program of China under Grant Nos. 2022YFA1004900, 2021YFA1001000, 2018YFA0704703 and 2022YFA1005001, the National Natural Science Foundation of China under Grant Nos. 62201322, 62171248, 12141108, 61971243 and 12226336, the Natural Science Foundation of Shandong (ZR2022QA031), the Natural Science Foundation of Tianjin (20JCZDJC00610), the Fundamental Research Funds for the Central Universities, Nankai University, and the Nankai Zhide Foundation.

\end{document}